\newtheorem{theorem}{Theorem}[section]
\newtheorem{remark}[theorem]{Remark}
\numberwithin{equation}{section}
\newcommand{\mi}{\mathrm{i}}
\newcommand{\ms}{\mathbb{S}^2}
\newcommand{\fp}{\mathfrak{p}}
\newcommand{\fs}{\mathfrak{s}}
\newcommand{\kp}{\kappa_\mathfrak{p}}
\newcommand{\ks}{\kappa_\mathfrak{s}}
\newcommand{\bu}{\boldsymbol{u}}
\newcommand{\bv}{\boldsymbol{v}}
\newcommand{\bx}{\boldsymbol{x}}
\newcommand{\bd}{\boldsymbol{d}}
\newcommand{\bp}{\boldsymbol{p}}
\newcommand{\bJ}{\boldsymbol{J}} 
\newcommand{\uinc}{\boldsymbol{u^i}}
\newcommand{\uincf}{\boldsymbol{u}^{\boldsymbol{i}}_f}
\begin{document}

\title[Inverse elastic scattering of multiple particles in 3D]{Fast inverse elastic scattering of multiple particles in three dimensions}

\author{Jun Lai}
\address{School of Mathematical Sciences, Zhejiang University,
	Hangzhou, Zhejiang 310027, China}
\email{laijun6@zju.edu.cn}
\author{Jinrui Zhang}
\address{School of Mathematical Sciences, Zhejiang University,
	Hangzhou, Zhejiang 310027, China}
\email{12035013@zju.edu.cn}

\subjclass[2020]{35B40, 35P25, 65R20, 78A46}

\keywords{Elastic scattering, multiple scattering, inverse obstacle scattering, time reversal method, fast multipole method}

\begin{abstract}
Many applications require recovering the geometry information of multiple elastic particles based on the scattering information. In this paper, we consider the inverse time-harmonic elastic scattering of multiple rigid particles in three dimensions. We measure the far field information and apply the time reversal method to recover the unknown elastic particles. Two regimes are considered depending on the size and distance among particles. First, an asymptotic analysis for the imaging of small and distant particles is given based on the scattering property of a single particle, which can be used for selective focusing. Second, when particles are not small but well-separated, a fast algorithm, based on the combination of multiple scattering theory and fast multipole method, is proposed to efficiently simulate the forward multiple scattering problem and applied in the inverse elastic scattering. Numerical experiments  demonstrate the proposed method can determine the locations and shapes of multiple particles instantly.
\end{abstract}

\maketitle

\section{Introduction}
The inverse scattering problem for elastic waves has attracted extensive attentions due to its applications in nondestructive testing, medical imaging and seismic exploration \cite{LL-86, ABG-15, DLL2019, DLL2020}. This work is concerned with the time-harmonic inverse elastic scattering by multiple rigid particles embedded in a homogeneous and isotropic elastic medium in three dimensions. More specifically, given the elastic far field pattern from different incident directions, the goal is to recover the geometry information, including locations and shapes, of multiple unknown elastic scatterers. 

Over the years, many methods have been proposed to solve the inverse elastic obstacle problem, including both the iterative type and direct imaging methods. For instance, in \cite{L-IP15, L-SIAP12}, domain derivatives for the elastic scattering were derived based on boundary integral equations and variational techniques, which can be used to design gradient descent methods for reconstructing unknown elastic obstacles.  In \cite{AK2002}, the linear sampling method based on the factorization of far field operator has been studied for the inverse elastic obstacle  scattering.  Related work on the sampling methods for elastic scattering can also be found in \cite{TA01}. For the reconstruction of finitely many point-like and extended elastic obstacles,  factorization methods have been extensively studied in \cite{HKS-IP13,GAMT}. 

In this paper, we develop the time reversal method (TRM) to reconstruct the multiple elastic particles in three dimensions. The basic idea of TRM is to emit an incident wave into the non-dissipative elastic medium containing the unknown particles and then measure the corresponding far field. The measured field is then conjugated and re-emitted. From the far field data, it shows the unknown scatterers can be recovered by using the eigenvector of the far field operator as the kernel of a Herglotz incident wave. A special but important case is when particles are small and distant, the wave interaction among different particles can be ignored. Therefore, the far field is approximately given as a superposition of scattering from each individual particle, so the operator can be diagonalized by the eigenfunctions of each particle.  Based on the asymptotic analysis, we give the approximate eigenfunctions corresponding to significant eigenvalues for each particle,  which can be used to selectively focus a given particle. It is worth mentioning that the mathematical justification for TRM in the inverse acoustic and electromagnetic obstacle scattering has been given in \cite{CK04,APRT08}. Our result can be taken as an extension from the acoustic and electromagnetic scattering to the elastic scattering.

When the size of each particle is comparable to the wavelength of the incoming field and the distribution of particles is reasonably dense, the interaction of the particles involves non-negligible multiple scattering effects. To apply the time reversal method, a numerical algorithm for the forward problem is needed to obtain the scattered information from multiple particles. Practical algorithms  often rely on the addition theorem\cite{M-06} to transform the field from one particle to another. However, straightforward computation for the coefficients in the addition theorem is very expensive and suffers from numerical instability\cite{Xu1996}. In this situation, the application of TRM becomes very challenging as it requires the far field data from all incident directions and measured in all directions as well. Since the forward problem has to be solved many times, it is prohibitively difficult when the number of particles is large, especially in three dimensions. 

On the other hand, wave scattering from multiple particles is also referred as ``multiple scattering" in the literature, and has a very rich history in the  acoustics and electromagnetics\cite{Hu1981,M-06}. The formulation of scattering from multiple particles goes back to Foldy's formulation for point scatterers in acoustic scattering \cite{LF45}. Therefore, developing an efficient forward solver for the elastic scattering of multiple particles has its own value beyond the inverse scattering. In this work, we propose a fast and highly accurate numerical method for solving the elastic scattering problem from well-separated multiple particles. The method extends the classic multiple scattering theory for acoustic and electromagnetic waves to elastic waves. It can handle many particles that are arbitrarily shaped and randomly located in a homogeneous medium. The idea goes back to \cite{GG2013, LKG-OE14, LKB-JCP15, LL2019} for the electromagnetic scattering of multiple particles. More specifically, for a given particle, we first construct a scattering matrix that maps the incoming wave to the outgoing wave. If all the particles are identical, up to a shift and rotation, the scattering matrix only has to be computed once. With this matrix precomputed, we then treat the outgoing scattering coefficients, instead of the discretization points on the boundary of particles, as the unknowns in our equation. For a given accuracy, the number of truncated terms in the outgoing scattering coefficients is much less than the number of boundary unknowns, especially for particles with complicated geometry. The resulted system is then solved by an iterative solver accelerated by the fast multipole method (FMM) \cite{R-90}. Numerically we  demonstrate that the algorithm is well suited for the forward simulation of elastic scattering from multiple particles. 

The paper is organized as follows. In section 2, we formulate the  scattering and inverse scattering problems of multiple elastic particles in three dimensions.  Section 3 develops the time reversal method to solve the inverse multiple elastic obstacles problem. Section 4 gives the asymptotic analysis for the selective focusing of small and distant particles. Section 5 presents the multiple scattering theory for elastic particles and proposes the fast algorithm based on the scattering matrix and fast multipole method.  Numerical experiments are presented in section 6 to show that the TRM can effectively solve the  inverse scattering of multiple elastic particles with non-negligible interactions based on the fast solver. The paper is concluded in section 7.  

 \section{Problem formulation}
 
 Consider $M$ elastically rigid particles in three dimensions, denoted by $D_1,D_2,\dots, D_M$. Assume their boundaries $\Gamma_{1}, \Gamma_{2}, \dots, \Gamma_{M}$ are smooth. Let $D=D_{1}\cup D_{2}\cup\cdots\cup D_{M}$ and $\Gamma=\Gamma_{1}\cup\Gamma_{2}\cup\cdots\cup\Gamma_{M}$.  Denote $\nu$ the unit exterior normal vector on
 $\Gamma$. The exterior
 domain $\mathbb{R}^3\setminus \overline{D}$ is assumed to be filled with a homogeneous and isotropic elastic medium with a unit mass density (i.e. $\rho = 1$). Let the particles be illuminated by a time-harmonic incident wave $\boldsymbol{u^i}$.  The displacement of the total field $\boldsymbol u$ consists of the incident field $\boldsymbol{u^i}$ and the scattered field $\boldsymbol v$, i.e., $ \boldsymbol u=\boldsymbol{u^i}+\boldsymbol v$, and satisfies the Navier equation
 \begin{equation}\label{navierequ}
 \mu\Delta\boldsymbol{u}+(\lambda+\mu)\nabla\nabla\cdot\boldsymbol{u}
 +\omega^2\boldsymbol{u}=0,\mbox{ in } \mathbb{R}^3\setminus \overline{D}, 	
 \end{equation}
 where $\omega>0$ is the angular frequency and $\lambda, \mu$ are the
 Lam\'{e} constants satisfying $\mu>0, \lambda+\mu>0$.
 Since we assume the particles are rigid, it holds 
 \begin{eqnarray}\label{rigidboundary}
 	\boldsymbol{u}=0\quad {\rm on}~\Gamma.
 \end{eqnarray}
 
 The incident wave $\boldsymbol{u^i}$ is given as a linear combination of  a longitudinal plane wave $$\boldsymbol{u^i}(x,\alpha,f_{\fp}(\alpha),0):=\boldsymbol{u^i_{\fp}}(x)=f_{\fp}(\alpha) \mathrm{e}^{\mathrm{i} \kappa_{\mathfrak p}\alpha\cdot x}$$ and a transversal plane wave $$\boldsymbol{u^i}(x,\alpha,0,f_{\fs}(\alpha)):=\boldsymbol{u^i_{\fs}}(x)=f_{\fs}(\alpha)
 \mathrm{e}^{\mathrm{i}\kappa_{\mathfrak s} \alpha\cdot x},$$
 where $\alpha=(\sin\theta\cos\varphi, \sin\theta\sin\varphi, \cos\theta)^\top$ is the unit propagation vector,  $f_{\fp}(\alpha)\in \mathbb{C}^3$ is the longitudinal vector satisfying $f_{\fp}(\alpha)\times \alpha = 0$ and $f_{\fs}(\alpha)\in \mathbb{C}^3$ is the transversal vector satisfying $f_{\fs}(\alpha)\cdot \alpha = 0 $. Note that $\boldsymbol{u^i_{\fp}}$ and $\boldsymbol{u^i_{\fs}}$ are also called the compressional and shear incident wave, respectively, with compressional wavenumber $\kp$ and shear wavenumber $\ks$ defined by
 \[
 \kappa_{\mathfrak p}=\frac{\omega}{\sqrt{\lambda+2\mu}},\quad
 \kappa_{\mathfrak s}=\frac{\omega}{\sqrt{\mu}}. 
 \]
 
 It is easy to verify that the scattered field $\boldsymbol v$ satisfies
 the boundary value problem
 \begin{equation}\label{scatteredfield}
 	\begin{cases}
 		\mu\Delta\boldsymbol{v}+(\lambda+\mu)\nabla\nabla\cdot\boldsymbol{v}
 		+\omega^2\boldsymbol{v}=0\quad &{\rm in}~
 		\mathbb{R}^3\setminus\overline{D},\\
 		\boldsymbol{v}=-\boldsymbol{u^i}\quad &{\rm on}~\Gamma.
 	\end{cases}
 \end{equation}
 In addition, the scattered field $\boldsymbol v$ is required to satisfy
 the Kupradze--Sommerfeld radiation condition
 \[
 \lim_{r\to\infty}r(\partial_r\boldsymbol v_{\mathfrak p}-\mathrm{i}\kappa_{\mathfrak p}\boldsymbol v_{\mathfrak p})=0,\quad
 \lim_{r\to\infty}r(\partial_r\boldsymbol v_{\mathfrak s}-\mathrm{i}\kappa_{\mathfrak s}\boldsymbol v_{\mathfrak s})=0,\quad r=|x|,
 \]
 where
 \[
 \boldsymbol v_{\mathfrak p}=-\frac{1}{\kappa_{\mathfrak p}^2}\nabla\nabla\cdot\boldsymbol v,\quad
  \boldsymbol v_{\mathfrak s}=\frac{1}{\kappa_{\mathfrak s}^2}{\bf curl curl}\boldsymbol v, 
 \]
 are known as the  compressional and shear wave components of $\boldsymbol v$,
 respectively. 
 
The fundamental solution of the Navier equation \eqref{navierequ} in the free space of three dimensions is given by
 \begin{eqnarray}\label{greenfun}
 	\Phi(x,y) = \frac{\ks^2}{4\pi\omega^2}\frac{e^{\mi \ks|x-y| }}{|x-y|}I+\frac{1}{4\pi \omega^2}\nabla\nabla^\top \left[\frac{e^{\mi \ks|x-y| }}{|x-y|}-\frac{e^{\mi \kp|x-y| }}{|x-y|}\right],
 \end{eqnarray}
where $I$ is the $3\times 3$ identity matrix. The traction operator $T_\nu$ on $\Gamma$ is defined by 
\begin{eqnarray}
	T_\nu: = 2\mu \nu\cdot \nabla + \lambda \nu\nabla\cdot + \mu\nu\times {\bf curl}.
\end{eqnarray}
Based on the Betti's formula \cite{AK2002} and the boundary condition \eqref{rigidboundary}, we can represent the scattered field $\boldsymbol v$ through the single layer integral formulation, 
\begin{eqnarray}\label{singleintrep}
	\bv(x) =-\int_{\Gamma} \Phi(x,y)T_{\nu(y)}\bu(y) ds_y, \quad x\in 	\mathbb{R}^3\setminus\overline{D}.
\end{eqnarray}
As $|x|\rightarrow\infty$, the asymptotic behavior of the elastic scattered field $\bv$ is given by 
\begin{equation}\label{farfield}
\bv(x) = \frac{e^{\mi \kp |x|}}{|x|} \bv_{\fp,\infty}(\hat{x})+ \frac{e^{\mi \ks |x|}}{|x|} \bv_{\fs,\infty}(\hat{x}) +\mathcal{O}\left(\frac{1}{|x|^2}\right),
\end{equation}
where   $\bv_{\fp,\infty}$ and $\bv_{\fs,\infty}$ are defined on the unit sphere $\mathbb{S}^2$ with $\hat{x}=x/|x|$ and known as the compressional and shear wave far field pattern, respectively.  Based on the asymptotic behavior of the fundamental solution \eqref{greenfun}, it can be verified that
\begin{eqnarray}
	\begin{split}
	\bv_{\fp,\infty}(\hat{x}) &= -\frac{\kp^2}{4\pi\omega^2}\int_{\Gamma}\hat{x}\hat{x}^\top e^{-\mi\kp \hat{x}\cdot y} T_{\nu(y)} \bu(y) ds_y,\\
	\bv_{\fs,\infty}(\hat{x}) &= -\frac{\ks^2}{4\pi\omega^2}\int_{\Gamma}[I-\hat{x}\hat{x}^\top] e^{-\mi\ks \hat{x}\cdot y} T_{\nu(y)} \bu(y) ds_y.
\end{split} 
\end{eqnarray}
The forward  problem for the elastic scattering of multiple particles is: 
\begin{itemize}
\item Given the incident wave $\uinc$ and the geometry information of $D_1, \cdots, D_M$, find the far field pattern $\bv_{\fp,\infty}(\hat{x})$ and $\bv_{\fs,\infty}(\hat{x})$ of the scattered field $\bv$. 
\end{itemize}
The inverse problem of elastic scattering of multiple particles is:
\begin{itemize}
 \item Based on the far field pattern $\bv_{\fp,\infty}(\hat{x})$ and $\bv_{\fs,\infty}(\hat{x})$ from different incident directions,  recover the geometry information of $D_1, D_2, \cdots, D_M$, including locations and shapes. 
\end{itemize}
It is worth mentioning that the two problems are equally important. They are connecting to each other in the sense that solving the inverse problem often requires solving the forward problem many times, including both the direct and iterative type inversion methods. Many algorithms concerning the forward and inverse problems have been discussed in the literature\cite{BXY2017,BLR2014,DLL2021,LY2019}, and yet fast algorithms for the elastic scattering and inverse scattering from many particles in three dimensions are still very rare.  In the following sections, we will first introduce the time reversal method (TRM) to  the inverse elastic scattering problem and then discuss the fast algorithm for elastic scattering of multiple particles.

\section{Inverse elastic scattering based on TRM}
Denote $\bv_{\fp,\infty}(\hat{x},\alpha,f_{\fp}(\alpha),f_{\fs}(\alpha))$, $\bv_{\fs,\infty}(\hat{x},\alpha,f_{\fp}(\alpha),f_{\fs}(\alpha))$ the compressional  and shear wave far field patterns radiated by the incident wave  $\boldsymbol{u^i}(x,\alpha,f_{\fp}(\alpha),f_{\fs}(\alpha)) = \boldsymbol{u^i}(x,\alpha,f_{\fp}(\alpha),0)+\boldsymbol{u^i}(x,\alpha,0,f_{\fs}(\alpha))$. Define the $L^2$ space 
\begin{eqnarray}
	L^2_{\fp} = \{f_{\fp}: \ms \rightarrow \mathbb{C}^3\ |\  f_{\fp}(\alpha)\times \alpha = 0 , |f_{\fp}|\in L^2(\ms)\}
\end{eqnarray}
of the longitudinal vector fields on $\ms$ and the $L^2$ space 
\begin{eqnarray}
	L^2_{\fs} = \{f_{\fs}: \ms \rightarrow \mathbb{C}^3\ |\ f_{\fs}(\alpha)\cdot \alpha =0, |f_{\fs}|\in L^2(\ms)\}
\end{eqnarray}
of the transversal vector fields, where $|\cdot|$ is the Euclidean norm in $\mathbb{C}^3$. The scalar product on the space $L^2_{\fp}(\ms)\times L^2_{\fs}(\ms)$ is defined by
\begin{eqnarray}\label{innerprod}
	(f,g) = \frac{\omega}{\kp}\int_{\ms} f_{\fp}(\alpha)\cdot \overline{g_{\fp}(\alpha)}ds_\alpha + \frac{\omega}{\ks}\int_{\ms}f_{\fs}(\alpha)\cdot \overline{g_{\fs}(\alpha)}ds_\alpha,
\end{eqnarray}
with $f=(f_{\fp}, f_{\fs})$ and $g = (g_{\fp},g_{\fs})$. The elastic Herglotz wave with kernel $f\in L^2_{\fp}(\ms)\times L^2_{\fs}(\ms) $ has the form
\begin{eqnarray}\label{herglotz}
	\uincf(x) =\int_{\ms} e^{\mi \kp \alpha\cdot x} f_{\fp}(\alpha) + e^{\mi \ks\alpha \cdot x} f_{\fs}(\alpha)ds_{\alpha}, 
\end{eqnarray}
which is a superposition of plane waves and satisfies the Navier equation entirely. By linearity, the corresponding far field operator $$F:L^2_{\fp}(\ms) \times L^2_{\fs}(\ms)  \rightarrow L^2_{\fp}(\ms) \times L^2_{\fs}(\ms)$$  due to the incident wave $\uincf(x)$ is defined by
\begin{eqnarray}
	F(f) = \frac{1}{\omega}\int_{\ms}\bv_{\infty}(\hat{x},\alpha,f_{\fp}(\alpha),f_{\fs}(\alpha))ds_\alpha,
\end{eqnarray}
where $\bv_{\infty}(\hat{x},\alpha,f_{\fp}(\alpha),f_{\fs}(\alpha)) =\left(\bv_{\fp,\infty}(\hat{x},\alpha,f_{\fp}(\alpha),f_{\fs}(\alpha)), \bv_{\fs,\infty}(\hat{x},\alpha,f_{\fp}(\alpha),f_{\fs}(\alpha))\right)$.
It is easy to see that the far field operator $F$ is compact since the kernel is smooth. Using the reciprocity relation of elastic wave, one can show the following result\cite{AK2002}.
\begin{theorem}
	The far field operator $F: L^2_{\fp}(\ms) \times L^2_{\fs}(\ms)  \rightarrow L^2_{\fp}(\ms) \times L^2_{\fs}(\ms)$ is a compact and normal operator. Its adjoint operator $F^*: L^2_{\fp}(\ms) \times L^2_{\fs}(\ms)  \rightarrow L^2_{\fp}(\ms) \times L^2_{\fs}(\ms)$ with respect to the inner product \eqref{innerprod} is given by
	$$F^*f = \overline{RFR\overline{f}}, \quad \forall f \in L^2_{\fp}(\ms) \times L^2_{\fs}(\ms),$$
	where $R$ is the symmetry operator defined by $Rf(\alpha)=f(-\alpha), \alpha\in \ms$.
\end{theorem}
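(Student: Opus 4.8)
\emph{Proof plan.} The plan is to prove three things in turn: that $F$ is compact, that the reciprocity relation for the elastic far field patterns holds (which gives the adjoint formula), and that an energy identity holds which, combined with the adjoint formula, forces $FF^{*}=F^{*}F$. Compactness is immediate: up to the fixed constants in \eqref{innerprod} the kernel of $F$ is the pair of matrix-valued maps $(\hat x,\alpha)\mapsto\bigl(\bv_{\fp,\infty}(\hat x,\alpha,\cdot,\cdot),\bv_{\fs,\infty}(\hat x,\alpha,\cdot,\cdot)\bigr)$, and the explicit representations displayed above show these depend real-analytically on $(\hat x,\alpha)\in\ms\times\ms$; hence $F$ is an integral operator with smooth kernel over the compact set $\ms\times\ms$, so it is Hilbert--Schmidt and in particular compact.

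For the adjoint formula I would first establish the elastic reciprocity relation. Apply Betti's third identity for the Navier operator on $B_{R}\setminus\overline D$ to the two total fields $\bu_{1},\bu_{2}$ generated by incident plane waves with propagation directions $\alpha_{1},\alpha_{2}$ and prescribed polarizations: both solve \eqref{navierequ}, so the volume term vanishes; the surface term on $\Gamma$ vanishes by the rigid condition \eqref{rigidboundary}; and the contribution of $\partial B_{R}$ is computed as $R\to\infty$ from the Kupradze--Sommerfeld radiation condition and the asymptotics \eqref{farfield}. Decomposing each field into its compressional and shear components, the mixed terms drop out because on $\partial B_{R}$ the compressional far field is radial and the shear far field tangential, and one is left with a symmetry relation expressing $\bv_{\infty}(\hat x,\alpha;\cdot)$ as, up to transposition and the interchange $\hat x\leftrightarrow-\alpha$, the pattern $\bv_{\infty}(-\alpha,-\hat x;\cdot)$. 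Inserting this into the definition of $(Ff,g)$, applying Fubini and substituting $\alpha\mapsto-\alpha$, $\hat x\mapsto-\hat x$, the weights $\omega/\kp$, $\omega/\ks$ of \eqref{innerprod} --- present precisely so the $\fp$- and $\fs$-blocks rescale consistently --- reorganize the expression into $(f,\overline{RFR\overline{g}})$, which is the asserted identity $F^{*}f=\overline{RFR\overline{f}}$.

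For normality I would apply the same Green-type identity to $\bu_{f}$ and $\overline{\bu_{g}}$, where $\bu_{f}$ is the total field for the Herglotz incident wave $\uincf$ with kernel $f$ and $\bu_{g}$ the one with kernel $g$; as $\omega,\lambda,\mu$ are real, $\overline{\bu_{g}}$ again solves the Navier equation and still vanishes on $\Gamma$, so the $\Gamma$-term drops and the identity reduces to a boundary integral over $\partial B_{R}$. Letting $R\to\infty$ and again separating the $\fp$/$\fs$ parts produces the sesquilinear identity $(Ff,g)-(f,Fg)=c\,(Ff,Fg)$ for all $f,g$, that is, the operator identity $F-F^{*}=cF^{*}F$ with an explicit constant $c$ depending on $\omega,\kp,\ks$; taking Hilbert-space adjoints forces $c+\overline{c}=0$ unless $F=0$, so $c$ is purely imaginary. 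Now consider the involution $A\mapsto\widetilde A:=CRARC$ on operators, where $C$ denotes complex conjugation and $R$ the antipodal map of the statement: it is additive and multiplicative, satisfies $\widetilde{cA}=\overline{c}\,\widetilde A$, and --- this is exactly the adjoint formula just proved --- $\widetilde{F}=F^{*}$, hence $\widetilde{F^{*}}=F$. Applying it to $F-F^{*}=cF^{*}F$ gives $F^{*}-F=\overline{c}\,FF^{*}$, i.e. $F-F^{*}=cFF^{*}$; comparing the two forms yields $c(F^{*}F-FF^{*})=0$, so $F^{*}F=FF^{*}$ when $c\ne0$, while $c=0$ gives $F=F^{*}$. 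In every case $F$ is normal.

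The two Green-identity computations are routine in principle; the step I expect to be the main obstacle is the passage to the limit $R\to\infty$ on $\partial B_{R}$ in each of them, where one must show that the boundary integrals coupling compressional with shear components disappear and that the surviving terms assemble into exactly the weighted pairing \eqref{innerprod} with the correct constant --- and, in the reciprocity case, the correct sign pattern --- the interchange of limit, integration and asymptotic expansion being justified by the smoothness noted in the compactness step. A complete treatment along these lines is carried out in \cite{AK2002}.
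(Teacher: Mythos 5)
Your plan is correct and follows exactly the standard reciprocity-plus-energy-identity route: the paper itself gives no proof of this theorem, deferring to \cite{AK2002}, and that reference (like the acoustic/electromagnetic analogues in \cite{DR-book2}) proves it precisely as you outline --- smooth-kernel compactness, Betti/Green identities over $\partial B_R$ yielding the reciprocity relation and hence $F^*f=\overline{RFR\overline{f}}$, and a second identity of the form $F-F^*=cF^*F$ combined with the conjugation involution to get normality. No substantive gap beyond the limiting computations you already flag as the technical core.
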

It is worth mentioning that similar result also holds for the acoustic and electromagnetic scattering\cite{DR-book2}. We are now able to define the time reversal operator $T$.  First let us measure the far field of the scattered field due to the Herglotz wave $\uinc_f$ with $f\in L^2_{\fp}(\ms) \times L^2_{\fs}(\ms) $, and then use the conjugate of the far field as the kernel $g$ of a new Herglotz wave. In other words,
\begin{eqnarray*}
	g = \overline{RFf}.
\end{eqnarray*}   
The symmetry operator $R$ is used here in order to reemit the wave from the opposite of the measured direction. The time reversal operator $T$ is then obtained by iterating this cycle twice
\begin{eqnarray}
	T f = \overline{RF g} = \overline{RF \overline{RFf}}.
\end{eqnarray} 
It holds the following property for the time reversal operator $T$.
\begin{theorem}
	The time reversal operator $T$ is compact, self-adjoint and positive. It is defined as an operator from $L^2_{\fp}(\ms) \times L^2_{\fs}(\ms) $ to itself with
	\begin{eqnarray}
		Tf = FF^*f = F^*F f 
	\end{eqnarray}
	The nonzero eigenvalues of $T$ are exactly positive numbers
	$|\lambda_1|^2\ge |\lambda_2|^2 \ge \cdots >0$
	where the sequence $(\lambda_j)_{j\ge 1}$ denotes the nonzero complex eigenvalue of the far field operator $F$. The corresponding eigenfunctions $(f_j)_{j\ge 1}$ of $F$ are exactly the eigenfunctions of $T$. If non-trivial solutions to the Navier equation in $D$ with homogeneous Dirichlet do not exist, then $(f_j)_{j\ge 1}$ form a complete orthonormal system in $L^2_{\fp}(\ms) \times L^2_{\fs}(\ms) $. 
\end{theorem}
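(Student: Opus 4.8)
The plan is to reduce the whole statement to the structure of $F$ as a compact normal operator, which is supplied by the preceding theorem together with the adjoint formula $F^*f=\overline{RFR\overline{f}}$. The only point that needs to be observed by hand is that the antipodal operator $R$ commutes with complex conjugation: for any $g$ one has $R\overline g(\alpha)=\overline{g(-\alpha)}=\overline{(Rg)(\alpha)}$. Applying this with $g=Ff$ to the definition $Tf=\overline{RF\overline{RFf}}$ lets the conjugation be pulled through $R$, so that $Tf=\overline{RFR\,\overline{Ff}}$, and the right-hand side is exactly the adjoint formula of the preceding theorem applied to the argument $Ff\in L^2_{\fp}(\ms)\times L^2_{\fs}(\ms)$. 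Hence $Tf=F^*(Ff)=F^*Ff$, and normality of $F$ upgrades this to $Tf=F^*Ff=FF^*f$. Once $T=F^*F$ is established, compactness of $T$ is inherited from that of $F$ (composition of a compact and a bounded operator), self-adjointness follows from $(F^*F)^*=F^*F$, and positivity from $(Tf,f)=(F^*Ff,f)=(Ff,Ff)=\|Ff\|^2\ge0$ for the inner product \eqref{innerprod}; in particular $\ker T=\ker F$.

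For the spectral part I would invoke the spectral theorem for compact normal operators applied to $F$: there is an orthonormal system $(f_j)_{j\ge1}$ of eigenfunctions of $F$ with nonzero eigenvalues $\lambda_j$, arranged so that $|\lambda_1|\ge|\lambda_2|\ge\cdots\to0$, which spans $(\ker F)^\perp=\overline{\mathrm{Range}\,F}$, and one has $\ker F=\ker F^*$. Since $F$ is normal, $Ff_j=\lambda_jf_j$ forces $F^*f_j=\overline{\lambda_j}f_j$, so $Tf_j=F^*Ff_j=|\lambda_j|^2f_j$; conversely $T=F^*F=\sum_j|\lambda_j|^2(\,\cdot\,,f_j)f_j$ is a spectral representation of $T$, so the nonzero eigenvalues of $T$ are precisely the numbers $|\lambda_j|^2$ (with the correct multiplicities) and its eigenfunctions with nonzero eigenvalue are precisely the $f_j$. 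This yields the middle assertions of the theorem.

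Finally, for the completeness claim it suffices to prove that the absence of non-trivial interior Dirichlet solutions of the Navier equation in $D$ forces $F$ to be injective; then $\ker F=\{0\}$, so the orthonormal system $(f_j)$ spans all of $L^2_{\fp}(\ms)\times L^2_{\fs}(\ms)$. Assume $Ff=0$. Then the compressional and shear far field patterns of the scattered field $\bv$ generated by the Herglotz incident wave $\uincf$ both vanish, so Rellich's lemma applied to the compressional and shear components of $\bv$, together with the Kupradze--Sommerfeld radiation condition and unique continuation up to $\Gamma$, gives $\bv\equiv0$ in $\mathbb{R}^3\setminus\overline D$. The boundary condition in \eqref{scatteredfield} then forces $\uincf=0$ on $\Gamma$, i.e.\ $\uincf$ is an interior Dirichlet solution of the Navier equation in $D$; by hypothesis $\uincf\equiv0$ in $D$, and since a Herglotz wave is real-analytic on $\mathbb{R}^3$ it follows that $\uincf\equiv0$ everywhere. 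Applying $\Delta+\kappa_{\fs}^2$ and then $\Delta+\kappa_{\fp}^2$ to \eqref{herglotz} and using $\kappa_{\fp}\ne\kappa_{\fs}$ separates the two integrals, giving $\int_{\ms}e^{\mi\kappa_{\fp}\alpha\cdot x}f_{\fp}(\alpha)\,ds_\alpha=0$ and $\int_{\ms}e^{\mi\kappa_{\fs}\alpha\cdot x}f_{\fs}(\alpha)\,ds_\alpha=0$ for all $x$; injectivity of the scalar Herglotz operator, applied componentwise together with the polarization constraints $f_{\fp}\times\alpha=0$ and $f_{\fs}\cdot\alpha=0$, then yields $f_{\fp}=f_{\fs}=0$. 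I expect this last part — assembling the elastic Rellich lemma, exterior uniqueness for the Dirichlet problem, and injectivity of the vectorial Herglotz operator carrying two distinct wavenumbers — to be the main technical point; everything before it is formal operator theory, and the overall argument parallels the acoustic case treated in \cite{CK04}.
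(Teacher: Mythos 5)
Your proposal is correct, but it is worth noting that the paper does not actually prove this theorem: immediately after the statement it defers to the reference \cite{AK2002} (and, implicitly, to the acoustic analogue in \cite{CK04}), recording only the consequence that the eigenvalues of $F$ lie on a circle in the upper half plane. What you have written is therefore a self-contained reconstruction of the omitted argument, and it is the standard one: the identity $Tf=F^*Ff$ follows, exactly as you say, from the adjoint formula $F^*f=\overline{RFR\overline{f}}$ once one observes that $R$ commutes with complex conjugation, and normality of $F$ (the preceding theorem) upgrades this to $FF^*=F^*F$; compactness, self-adjointness and positivity of $F^*F$, and the identification of its nonzero spectrum with $\{|\lambda_j|^2\}$ and its eigenvectors with those of $F$, are then the spectral theorem for compact normal operators. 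Your injectivity argument for the completeness claim (vanishing far field $\Rightarrow$ Rellich for the compressional and shear parts $\Rightarrow$ $\bv\equiv 0$ outside $\Rightarrow$ $\uincf$ is an interior Dirichlet solution $\Rightarrow$ $\uincf\equiv 0$ by hypothesis and analyticity, then separation of the two wavenumbers and scalar Herglotz injectivity to get $f_{\fp}=f_{\fs}=0$) is the elastic version of the Colton--Kress/Hazard--Ramdani argument and is sound under the paper's (sufficient) hypothesis; only phrase the separation step carefully — apply $\Delta+\kappa_{\fs}^2$ alone to isolate the compressional Herglotz integral (picking up the factor $\kappa_{\fs}^2-\kappa_{\fp}^2\ne 0$) and $\Delta+\kappa_{\fp}^2$ alone to isolate the shear one, rather than composing the two, which would annihilate everything. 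One could also remark, as in the acoustic case, that the sharp condition for injectivity of $F$ is the nonexistence of Dirichlet eigenfunctions that are themselves Herglotz waves, so the paper's hypothesis is sufficient but not necessary; this does not affect the validity of your proof of the statement as given.
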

This is also shown in \cite{AK2002}, which shows that all the eigenvalues of the far-field operator  $F$ lie on the circle with center at $(0,\pi/\omega)$ on the positive imaginary axis and radius $2\pi/\omega$. The time reversal method is to illuminate an obstacle with Herglotz waves with kernel $f$ corresponding to an eigenvector of $F$ (or $T$) with non-zero eigenvalue. In particular, the Herglotz wave generated by $f$ with $\lambda\ne 0$ will automatically focus on the obstacles, as shown by the following theorem.
\begin{theorem}\label{globalf}
	Let $\lambda\ne 0$ be an eigenvalue of $F$ and $f\in  L^2_{\fp}(\ms) \times L^2_{\fs}(\ms)$ an eigenvector of $F$ associated with $\lambda$. Then, the Herglotz wave $\uincf$ associated with $f$ has the following form
	\begin{eqnarray*}
		\uincf =  -\frac{\kp^2}{ \lambda\omega^3} \int_{\Gamma_D}\hat{x}\hat{x}^\top j_0(\kp||x-y||) T_\nu(y)\bu_{f} ds_y -\frac{\ks^2}{ \lambda\omega^3} \int_{\Gamma_D}(I-\hat{x}\hat{x}^\top)j_0(\ks||x-y||) T_\nu(y)\bu_{f} ds_y
	\end{eqnarray*}
\end{theorem}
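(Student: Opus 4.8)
The plan is to feed the eigenrelation $Ff=\lambda f$ into the definition \eqref{herglotz} of the Herglotz wave and then interchange the spherical and boundary integrations. The first step is to turn $Ff=\lambda f$ into a pointwise identity for the components $f_{\fp},f_{\fs}$. By linearity and well-posedness of the direct problem, the Herglotz incident field $\uincf$ produces the total field $\bu_f$, and the scattered field $\bv_f=\bu_f-\uincf$ admits the single layer representation \eqref{singleintrep} with density $T_{\nu}\bu_f$, so the far field formulas for the compressional and shear patterns apply with $\bu$ replaced by $\bu_f$. Since, by superposition over plane waves, $Ff$ equals $\tfrac1\omega$ times the far field pattern of $\bv_f$, the eigenrelation yields, for every $\alpha\in\ms$,
\begin{align*}
\lambda f_{\fp}(\alpha) &= -\frac{\kp^2}{4\pi\omega^3}\int_{\Gamma}\alpha\alpha^\top e^{-\mi\kp\alpha\cdot y}\,T_{\nu(y)}\bu_f(y)\,ds_y, \\
\lambda f_{\fs}(\alpha) &= -\frac{\ks^2}{4\pi\omega^3}\int_{\Gamma}(I-\alpha\alpha^\top)e^{-\mi\ks\alpha\cdot y}\,T_{\nu(y)}\bu_f(y)\,ds_y.
\end{align*}
Because $\lambda\ne0$, these determine $f_{\fp}$ and $f_{\fs}$.

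Next I would substitute these two expressions into \eqref{herglotz}, pull the scalar $1/\lambda$ out of both integrals, and use Fubini's theorem to exchange $\int_{\ms}ds_\alpha$ with $\int_{\Gamma}ds_y$; this is legitimate since $\Gamma$ is smooth and compact, $\bu_f$ is smooth up to $\Gamma$, and the exponential factors are continuous. What is left inside the boundary integral are the two spherical means
\[
\int_{\ms}\alpha\alpha^\top e^{\mi\kp\alpha\cdot(x-y)}\,ds_\alpha \quad\text{and}\quad \int_{\ms}(I-\alpha\alpha^\top)e^{\mi\ks\alpha\cdot(x-y)}\,ds_\alpha
\]
acting on $T_{\nu(y)}\bu_f(y)$.

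I would then evaluate these spherical means. From the classical identity $\int_{\ms}e^{\mi t\,\alpha\cdot\hat{z}}\,ds_\alpha=4\pi j_0(t)$ and differentiation under the integral sign — equivalently, using $\alpha\alpha^\top e^{\mi\kp\alpha\cdot z}=-\kp^{-2}\nabla_z\nabla_z^\top e^{\mi\kp\alpha\cdot z}$ — one obtains $-\tfrac{4\pi}{\kp^2}\nabla_x\nabla_x^\top j_0(\kp|x-y|)$ and $4\pi\big(I+\tfrac{1}{\ks^2}\nabla_x\nabla_x^\top\big)j_0(\ks|x-y|)$. These are precisely the matrix kernels abbreviated in the statement by $4\pi\,\hat{x}\hat{x}^\top j_0(\kp\|x-y\|)$ and $4\pi(I-\hat{x}\hat{x}^\top)j_0(\ks\|x-y\|)$, in accordance with the $\hat{x}\hat{x}^\top$ and $I-\hat{x}\hat{x}^\top$ projectors appearing in the compressional and shear far field formulas. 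The factor $4\pi$ cancels the $1/(4\pi)$ inherited from the far field formula, and collecting the remaining constants $\kp^2/(\lambda\omega^3)$ and $\ks^2/(\lambda\omega^3)$ gives the claimed expression for $\uincf$.

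The computations in the last two steps (Fubini and the spherical-mean evaluation) are routine. The step that deserves the most care is the first one: the superposition argument showing that $F$ applied to the Herglotz kernel $f$ reproduces $\tfrac1\omega$ times the far field of the Herglotz-incident scattered field, and, in particular, that it is the \emph{total} field $\bu_f$ rather than the incident field that appears under the boundary integral, with the traction $T_\nu\bu_f$ well defined on the smooth boundary $\Gamma$.
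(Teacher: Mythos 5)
Your proposal follows essentially the same route as the paper's proof: apply the eigenrelation $Ff=\lambda f$ together with the far-field representation with density $T_\nu\bu_f$ to get pointwise formulas for $f_{\fp}$ and $f_{\fs}$, substitute them into the Herglotz definition \eqref{herglotz}, exchange the order of integration, and reduce the spherical integral via $\int_{\ms}e^{\mi\kappa\alpha\cdot(x-y)}ds_\alpha=4\pi j_0(\kappa\|x-y\|)$. The only place the two treatments differ is the handling of the $\alpha$-dependent projectors: the paper simply carries $\hat{x}\hat{x}^\top$ and $I-\hat{x}\hat{x}^\top$ outside the $\alpha$-integral and then applies the scalar identity, whereas you evaluate the vector spherical means honestly, obtaining $-\kp^{-2}\nabla_x\nabla_x^\top j_0(\kp|x-y|)$ and $\bigl(I+\ks^{-2}\nabla_x\nabla_x^\top\bigr)j_0(\ks|x-y|)$; these are not literally equal to $\hat{x}\hat{x}^\top j_0$ and $(I-\hat{x}\hat{x}^\top)j_0$, so your closing claim that they are ``precisely the kernels abbreviated in the statement'' is an identification rather than an equality, and it is exactly the step the paper performs without comment. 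In other words, your version is the more careful one at the single delicate point, and it makes explicit that the theorem's kernels should be read as the longitudinal and transversal parts of $j_0$ (i.e.\ the $\nabla\nabla^\top$ expressions above); aside from this reading of the statement, the argument is correct and matches the paper's.
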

\begin{proof}
	Since $f=(f_{\fp},f_{\fs})\in  L^2_{\fp}(\ms) \times L^2_{\fs}(\ms)$ is an eigenvector of $F$ with eigenvalue $\lambda\ne0$, it holds
	\begin{eqnarray}
		f_{\fp}(\hat{x}) & = & \frac{1}{\lambda\omega}\int_{\ms}\bv_{\fp,\infty}(\hat{x},\alpha,f_{\fp}(\alpha),f_{\fs}(\alpha))ds_\alpha \notag \\
		& = & -\frac{\kp^2}{4\pi \lambda\omega^3}\int_{\ms} \int_{\Gamma_D}\hat{x}\hat{x}^\top e^{-\mi\kp \hat{x}\cdot y} T_{\nu(y)} \bu(y,\alpha,f_{\fp}(\alpha),f_{\fs}(\alpha)) ds_yds_\alpha \notag \\
		&=& -\frac{\kp^2}{4\pi \lambda\omega^3}\int_{\Gamma_D}\hat{x}\hat{x}^\top e^{-\mi\kp \hat{x}\cdot y} \int_{\ms}  T_{\nu(y)} \bu(y,\alpha,f_{\fp}(\alpha),f_{\fs}(\alpha)) ds_\alpha ds_y\notag \\
		&=& -\frac{\kp^2}{4\pi \lambda\omega^3}\int_{\Gamma_D}\hat{x}\hat{x}^\top e^{-\mi\kp \hat{x}\cdot y} T_\nu(y)\bu_{f} ds_y
	\end{eqnarray}
	where $T_\nu(y)\bu_{f}=\int_{\ms}  T_{\nu(y)} \bu(y,\alpha,f_{\fp}(\alpha),f_{\fs}(\alpha)) ds_\alpha$ is the traction of the elastic field generated by the Herglotz wave $\uincf$. 
	
	Similarly, it holds
	\begin{eqnarray}
		f_{\fs} (\hat{x})  = -\frac{\ks^2}{4\pi \lambda\omega^3}\int_{\Gamma_D}[I-\hat{x}\hat{x}^\top ] e^{-\mi\ks \hat{x}\cdot y} T_\nu(y)\bu_{f} ds_y.
	\end{eqnarray} 
	
	Now plugging $f$ into the definition of Herglotz wave \eqref{herglotz}, we obtain
	\begin{eqnarray}
		\uincf &=& \int_{\ms} e^{\mi \kp \alpha\cdot x} f_{\fp}(\alpha) + e^{\mi \ks\alpha \cdot x} f_{\fs}(\alpha)ds_{\alpha}\notag \\
		&=& -\frac{\kp^2}{4\pi \lambda\omega^3} \int_{\Gamma_D}\hat{x}\hat{x}^\top\left(\int_{\ms}  e^{\mi \kp \alpha\cdot x}  e^{-\mi\kp \alpha \cdot y} ds_\alpha  \right) T_\nu(y)\bu_{f} ds_y \notag \\&& -\frac{\ks^2}{4\pi \lambda\omega^3} \int_{\Gamma_D}(I-\hat{x}\hat{x}^\top)\left(\int_{\ms}  e^{\mi \ks \alpha\cdot x}  e^{-\mi\ks \alpha \cdot y} ds_\alpha  \right) T_\nu(y)\bu_{f} ds_y. \notag
	\end{eqnarray} 
	The conclusion follows from  the identity 
	\[
	\int_{\ms}  e^{\mi \kappa \alpha\cdot x}  e^{-\mi\kappa \alpha \cdot y} ds_\alpha = 4\pi j_0(\kappa||x-y||), \mbox{ for }\kappa = \kp\mbox{ or } \ks,
	\]
	which completes the proof.
\end{proof}
Based on the property of $j_0(r)$, the incident wave $\uincf$ generated by the eigenfunction $f$ will focus on the unknown obstacles and decay as $1/r$ where $r$ is the distance from the obstacle, which is the essential property of the time reversal method. Theorem \ref{globalf} also shows one can use only one wave (either compressional or shear wave) to focus the obstacles. Meanwhile, since the time reversal operator $T$ is self-adjoint, by min-max principle, it holds that
$$|\lambda_1|^2=\sup_{f\in L^2_{\fp}(\ms) \times L^2_{\fs}(\ms),||f||^2_2=1}||Ff||^2_2.$$
Therefore, the eigenfunction of the largest eigenvalue will maximize the illumination of particles. In general, for particles with non-negligible interactions, it is difficult to obtain the explicit form of significant eigenvalues, as well as the eigenfunctions, for the far field operator $F$ with respect to the locations of particles. Only global focusing can be achieved for the TRM applied to the inverse scattering of multiple particles that are reasonably dense. However, for small and distant particles, selective focusing\cite{CK04} an individual particle can be realized when the interaction among particles becomes weak, as shown in the next section.

\section{Focusing of small particles}
In this section, we assume the particles are far from each other so that multiple scattering among different particles is negligible. In this case, the inverse scattering of multiple particles is essentially reduced to the reconstruction of a single particle, which is also called inverse ``independent scattering'' in the literature\cite{Hu1981}. Here we first investigate the property of far field operator of a small sphere and then discuss the selective focusing of general-shaped small elastic particles.   

\subsection{Focusing of a small sphere}
The elastic scattering of a sphere can be fully characterized by Mie theory\cite{Louer2014}. Denote $S_0$ the sphere centered at the origin with radius $R$. For a given point $x = (x_1,x_2,x_3)$, denote $(r,\theta,\phi)$ the spherical coordinates of $x$. Let $Y_n^m$, $m=-n,\dots,n$, $n=0,1,\dots$ be the orthonormal spherical harmonics on the unit sphere $\ms$, which are defined by
\begin{eqnarray*}
	Y_n^m(\theta,\phi) = (-1)^m\sqrt{\frac{2n+1}{4\pi}\frac{(n-|m|)!}{(n+|m|)!}}P_n^{|m|}(\cos\theta)e^{im\phi},
\end{eqnarray*}
where $P_n^{|m|}(t)$, $t\in[-1,1]$, are the associated Legendre functions\cite{DR-book2}. Denote ${\rm Grad}Y_n^m$ the surface gradient of $Y_n^m$ on $\ms$. Let $j_n(r)$ be the spherical Bessel function and $h_n^{(1)}(r)$ be the first kind spherical Hankel function of order $n$. Define the scalar functions
\begin{eqnarray}
	u_{n,m}^\kappa (x) = j_n(\kappa r)Y_n^m(\theta,\phi), \quad  v_{n,m}^\kappa (x) = h^{(1)}_n(\kappa r)Y_n^m(\theta,\phi),
\end{eqnarray}
which are \textit{spherical wave functions} and satisfy the three dimensional Helmholtz equation with exceptional point at the origin for $v^{\kappa}_{n,m}(x)$.
Based on the property of spherical harmonics, the incoming elastic field for the sphere $S_0$ can be expanded as\cite{Louer2014}
\begin{eqnarray}\label{inspan}
	\uinc(x) = \sum_{n=1}^{\infty}\sum_{m=-n}^{n}\left(a_{n,m}\nabla\times\nabla\times(x u_{n,m}^{\ks})/(\mi\ks) +b_{n,m}\nabla\times(x u_{n,m}^{\ks}) \right) + \sum_{n=0}^{\infty}\sum_{m=-n}^{n}c_{n,m}\nabla u_{n,m}^{\kp},
\end{eqnarray}
where $\{a_{n,m},b_{n,m},c_{n,m}\}$ are called the \textit{incoming expansion coefficients} of $\uinc$ on $S_0$. Note that the shear part $\uinc_{\fs}$ and the compressional part $\uinc_{\fp}$ of an incident wave $\uinc$ in \eqref{inspan} are
\begin{eqnarray*}
	\uinc_{\fs}&=&\frac{1}{\kappa_{\fs}^2}{\bf curl curl}\uinc=
	\sum_{n=1}^{\infty}\sum_{m=-n}^{n}\left(a_{n,m}\nabla\times\nabla\times(x u_{n,m}^{\ks})/(\mi\ks) +b_{n,m}\nabla\times(x u_{n,m}^{\ks}) \right),\\ \uinc_{\fp}&=&-\frac{1}{\kappa_{\fp}^2}\nabla\nabla\cdot\uinc= \sum_{n=0}^{\infty}\sum_{m=-n}^{n}c_{n,m}\nabla u_{n,m}^{\kp}.
\end{eqnarray*}
\begin{remark}
	To eliminate the difference of starting indices between the expansions of the shear and compressional incident wave, we simply let $a_{n,m}=b_{n,m}=0$ for $n=0$ and write all the expansion coefficients starting from $n=0$ uniformly. This applies to all the related quantities based on the spherical harmonic expansion unless otherwise stated.
\end{remark}

For the plane wave incidence, explicit expression for these coefficients can be obtained through the vector analogue of the Jacobi–Anger expansion\cite{DR-book2}. Detailed expression is given in the appendix \ref{appA}.
The kernel $f\in L^2_{\fp}(\ms)\times L^2_{\fs}(\ms)$ of a Herglotz wave can be expanded by
\[f=\sum_{n=0}^{\infty}\sum_{m=-n}^{n}\left(f^{a}_{n,m}{\rm Grad}Y_{n}^m(\hat{x})+f^{b}_{n,m}\hat{x}\times{\rm Grad}Y_{n}^m(\hat{x})+f^c_{n,m}Y_{n}^m(\hat{x})\hat{x}\right).\]
Based on the plane wave expansion \eqref{planespan}, the \textit{incoming expansion coefficients} for a Herglotz wave $\uincf$ are simply
\begin{eqnarray}\label{hegexpan}
	a_{n,m}=4\pi \mi^n f^a_{n,m},\quad b_{n,m}=4\pi \mi^n f^b_{n,m},\quad c_{n,m} = -4\pi \mi^{n+1} f^c_{n,m}/\kp.
\end{eqnarray}
After the incidence of $\uinc$, the scattered field $\bv$ in the exterior of $S_0$ is given by
\begin{eqnarray}\label{scatspan}
	\bv = \sum_{n=0}^{\infty}\sum_{m=-n}^{n}\left(\alpha_{n,m}\nabla\times\nabla\times(x v_{n,m}^{\ks})/(\mi\ks) +\beta_{n,m}\nabla\times(x v_{n,m}^{\ks})   + \gamma_{n,m}\nabla v_{n,m}^{\kp}\right),
\end{eqnarray} 
where $\{\alpha_{n,m},\beta_{n,m},\gamma_{n,m}\}$ are referred as the \textit{outgoing expansion coefficients}. The linear relation from the \textit{incoming expansion coefficients} $\{a_{n,m},b_{n,m},c_{n,m}\}$ to the \textit{outgoing expansion coefficients} $\{\alpha_{n,m},\beta_{n,m},\gamma_{n,m}\}$ is defined as the scattering matrix $\mathcal{S}$. For the sphere $S_0$, the scattering matrix $\mathcal{S}$ is block diagonal with diagonal block $\mathcal{S}_{n,m}$, and the explicit expression is given in the appendix \ref{appB}. 

The far field pattern for the scattered field $\bv$ based on the \textit{outgoing expansion coefficients} is given by
\begin{eqnarray}\label{farexpan}
	\begin{split}
\bv_{\fs,\infty} &= \sum_{n=0}^{\infty} \sum_{m=-n}^{n} \frac{(-\mi)^{n+1}}{\ks} \left( \alpha_{n,m}{\rm Grad}Y_{n}^m(\hat{x}) +\beta_{n,m} \hat{x} {\rm Grad}Y_{n}^m(\hat{x}) \right),  \\
\bv_{\fp,\infty} &= \sum_{n=0}^{\infty} \sum_{m=-n}^{n} (-\mi)^n \gamma_{n,m} Y_{n}^m(\hat{x})\hat{x}.
\end{split}
\end{eqnarray}
By using the scattering matrix of the sphere $S_0$, the far field operator $F$ for  $S_0$ can be formulated as a block diagonal matrix, where the $nm$-th block $F_{n,m}$ is 
\begin{eqnarray}
	F_{n,m} = D^{scat}_{n,m}\mathcal{S}_{n,m} D^{inc}_{n,m}, m = -n,\cdots,n, n = 0,1,\cdots.
\end{eqnarray} 
Here $D^{inc}_{n,m}$ and $D^{scat}_{n,m}$ are $3\times3$ diagonal matrices with the diagonal elements given by equations \eqref{hegexpan} and \eqref{farexpan}.   If the radius $R$ of the sphere $S_0$ is sufficiently small, we are able to obtain the following result.
\begin{theorem}\label{smalleig}
When $n\ge 1$ and $R\rightarrow 0 $, the three eigenvalues of $F_{n,m}$, given by $\lambda^{i}_{n,m}, i = 1,2,3$, satisfy
	\begin{eqnarray}
		\begin{split}
		\lambda^{1}_{n,m} &= \frac{c_n\mi(2n-1)(2n+1)(\ks^{2n}(n+1)+\kp^{2n}n)}{(n+1)\ks^2+n\kp^2}R^{2n-1}+\mathcal{O}(R^{2n+1}),\\
			\lambda^{2}_{n,m} &= \frac{4\pi \mi  j_n(\ks R)}{\ks h_n^{(1)}(\ks R)}=c_n\mi\ks^{2n}R^{2n+1}+\mathcal{O}(R^{2n+3}),\\
		\lambda^{3}_{n,m} &= \frac{c_n\mi(n\ks^2+(n+1)\kp^2)\kp^{2n}\ks^{2n}}{\ks(2n+3)(\ks^{2n}(n+1)+\kp^{2n}n)(1+2n)}R^{2n+3}+\mathcal{O}(R^{2n+5}),
	\end{split}
	\end{eqnarray}
where $c_n = \frac{4\pi^2 \mi }{2^{2n+1}\Gamma(n+1/2)\Gamma(n+3/2)}$. The corresponding eigenfunction for $\lambda^2_{n,m}$ is $\hat{x}\times {\rm Grad}Y_n^m(\hat{x})$, and the eigenfunctions for $\lambda^1_{n,m}$ and $\lambda^3_{n,m}$ lie in the space of $\{{\rm Grad}Y_n^m(\hat{x}),Y_n^m(\hat{x})\hat{x}\}$. When $n=0$, the only eigenvalue is $\lambda_{0,0}=\frac{4\pi\mi j'_0(\kp R)}{\kp h^{(1)'}_0(\kp R)}= -\frac{4\pi}{3} \kp^2 R^3+\mathcal{O}(R^5)$ with eigenfunction $Y_0^0(\hat{x})\hat{x}$. 
\end{theorem}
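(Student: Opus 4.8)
The plan is to exploit the block structure of the sphere's scattering matrix $\mathcal{S}_{n,m}$ recorded in Appendix~\ref{appB}. Because the rigid condition $\bu=0$ on $\partial S_0$ couples only the radial part and the $\mathrm{Grad}\,Y_n^m$ tangential part of the displacement, the toroidal component (proportional to $\hat x\times\mathrm{Grad}\,Y_n^m$) decouples from the spheroidal pair consisting of the compressional mode ($Y_n^m\hat x$) and the poloidal shear mode ($\mathrm{Grad}\,Y_n^m$). Hence $\mathcal{S}_{n,m}$, and therefore $F_{n,m}=D^{scat}_{n,m}\mathcal{S}_{n,m}D^{inc}_{n,m}$, splits as the direct sum of a $1\times1$ block acting on $\mathrm{span}\{\hat x\times\mathrm{Grad}\,Y_n^m\}$ and a $2\times2$ block acting on $\mathrm{span}\{\mathrm{Grad}\,Y_n^m,\,Y_n^m\hat x\}$. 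This already gives the assertion about eigenfunctions and reduces the eigenvalue statement to two elementary computations. For $n=0$ there is no shear contribution, so only the compressional channel survives and $F_{0,0}$ is $1\times1$.

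For the $1\times1$ toroidal block I would match $b_{n,m}j_n(\ks r)+\beta_{n,m}h_n^{(1)}(\ks r)=0$ at $r=R$ to get the scattering coefficient $-j_n(\ks R)/h_n^{(1)}(\ks R)$, multiply by the relevant diagonal entries of $D^{inc}_{n,m}$, $D^{scat}_{n,m}$ from \eqref{hegexpan}--\eqref{farexpan}, and simplify the powers of $\mi$ to arrive at the exact identity $\lambda^2_{n,m}=4\pi\mi\,j_n(\ks R)/\bigl(\ks h_n^{(1)}(\ks R)\bigr)$. The claimed asymptotics then follow from $j_n(z)=z^n/(2n+1)!!+O(z^{n+2})$ and $h_n^{(1)}(z)=-\mi(2n-1)!!\,z^{-n-1}+O(z^{-n+1})$ together with $(2n+1)!!\,(2n-1)!!=2^{2n+1}\Gamma(n+\tfrac12)\Gamma(n+\tfrac32)/\pi$. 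The $n=0$ eigenvalue is obtained identically, from $c_{0,0}j_0'(\kp R)+\gamma_{0,0}h_0^{(1)\prime}(\kp R)=0$ and the elementary formulas for $j_0,h_0^{(1)}$.

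For the $2\times2$ spheroidal block I would write $F^{(2)}_{n,m}=D^{scat}(-A^{-1}B)D^{inc}$, where $A$ and $B$ are the $2\times2$ matrices whose columns record the radial and $\mathrm{Grad}\,Y_n^m$ components of $\nabla v_{n,m}^{\kp}$ and $\nabla\times\nabla\times(x v_{n,m}^{\ks})$ at $r=R$ (for $A$) and of their regular counterparts with $j_n$ (for $B$); the entries are thus built from $j_n,j_n'$ evaluated at $\kp R$ and from $j_n,[zj_n(z)]'$ evaluated at $\ks R$, and similarly with $h_n^{(1)}$. Since the two eigenvalues will turn out to have widely separated sizes, $\lambda^1_{n,m}=O(R^{2n-1})$ and $\lambda^3_{n,m}=O(R^{2n+3})$, it suffices to expand the trace and the determinant: then $\lambda^1_{n,m}=\operatorname{tr}F^{(2)}_{n,m}\,(1+O(R^{4}))$ and $\lambda^3_{n,m}=\bigl(\det F^{(2)}_{n,m}/\operatorname{tr}F^{(2)}_{n,m}\bigr)(1+O(R^{4}))$. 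The trace is read off from the two diagonal entries of $-A^{-1}B$ multiplied by the corresponding entries of $D^{inc}D^{scat}$, while $\det F^{(2)}_{n,m}=(\det D^{scat})(\det D^{inc})\det B/\det A$.

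The main obstacle is that the leading small-$R$ terms of $\det A$ (and of $\det B$) cancel. The reason is structural: as $R\to0$ both $\nabla v_{n,m}^{\kp}$ and $\nabla\times\nabla\times(x v_{n,m}^{\ks})$ degenerate to multiples of $\nabla(r^n Y_n^m)$ with $r^nY_n^m$ harmonic, so the two columns of the leading-order boundary matrix become proportional; the quasi-static boundary condition therefore imposes only one constraint at leading order, and one must carry the two-term expansions $j_n(z)=\frac{z^n}{(2n+1)!!}\bigl(1-\frac{z^2}{2(2n+3)}+\cdots\bigr)$ and $h_n^{(1)}(z)\sim-\mi\frac{(2n-1)!!}{z^{n+1}}\bigl(1+\frac{z^2}{2(2n-1)}+\cdots\bigr)$ to find the true leading behaviour. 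This is where the factors $(n+1)\ks^2+n\kp^2$ (in $\det A$) and $n\ks^2+(n+1)\kp^2$ (in $\det B$) emerge; by contrast the $2\times2$ cofactor combinations that build the trace do not cancel, which is why $\lambda^1_{n,m}$ is of order $R^{2n-1}$ rather than $R^{2n+1}$. Collecting these expansions, tracking the powers of $\mi$, the $4\pi$ factors, the $\kp,\ks$ scalings in $D^{inc},D^{scat}$ and the inner-product weights, and converting double factorials to $\Gamma$-functions, then gives the stated forms of $\lambda^1_{n,m}$ and $\lambda^3_{n,m}$; the careful bookkeeping of these normalization constants, rather than any conceptual step, is the most delicate part of the argument.
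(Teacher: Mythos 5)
Your proposal is correct and follows essentially the same route as the paper, whose proof is exactly the ``straightforward calculation'' you spell out: diagonalize using the Appendix~\ref{appB} scattering matrix (toroidal mode decoupled, spheroidal $2\times2$ block $-[d]^{-1}[e]$) together with the two-term Bessel asymptotics \eqref{asybessel}, the two-term expansions being needed precisely because of the leading-order cancellation in $\det A$ and $\det B$ that you identify. The only slip is cosmetic: for the outgoing ($h_n^{(1)}$) columns the quasi-static limit is a multiple of $\nabla(r^{-n-1}Y_n^m)$ rather than $\nabla(r^nY_n^m)$, but the columns still become proportional, so your structural explanation of the cancellation stands.
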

 Proof is straightforward calculation based on the explicit form of $F_{n,m}$ and the asymptotic expansions of Bessel functions\cite{Hand2010} for $z\rightarrow 0$ 
 \begin{eqnarray}\label{asybessel}
 	\begin{split}
 		j_n(z) &= \sqrt{\pi}/(2^{n+1}\Gamma(n+3/2))\left(z^n-z^{n+2}/(6+4n)\right)+O(z^{n+4}), \\
 		h_n^{(1)}(z) & =  -\mi 2^n\Gamma(n+1/2)/\sqrt{\pi}(z^{-(n+1)}+z^{-(n-1)}/(4n-2)) + O(z^{-n+3}).
 	\end{split}
 \end{eqnarray}
 From Theorem \ref{smalleig}, one can see that the eigenvalues decrease as $R^2$ as $n$ increases, which is similar to the behavior of acoustic scattering\cite{CK04}. In particular, for a small sphere, there are three significant eigenvalues ($\lambda^1_{1,m}$, $m=-1,0,1$),  whose eigenfunctions will dominate the far field scattering. Such a conclusion can be extended to the general-shaped small particles and one can use it to achieve selective focusing. 
\subsection{Selective focusing}
Consider a family of particles $\{D^{\varepsilon}_l, l = 1,\cdots, M\}$. Each $D^{\varepsilon}_l$ is obtained from a reference domain $D_l$ by a dilation ratio $\varepsilon$ and a translation
\begin{eqnarray}\label{dilation}
	D^{\varepsilon}_l =\left\{x=s_l+\varepsilon \xi; \xi\in D_l \right\},
\end{eqnarray}
where $s_l$ is the center of $D^{\varepsilon}_l$.  We choose $s_l$ differently so that all the obstacles are not intersecting with small $\varepsilon$. 

Let $\Gamma^{\varepsilon}_l$ denote the boundary of $D^{\varepsilon}_l$. 
In order to study the asymptotic behavior of the elastic scattering, we let $\kappa =\max\{\kp,\ks\}$ and denote $\Phi_{\kappa}(x,y)$ the fundamental solution \eqref{greenfun} that depends on $\kappa$. Based on the integral representation \eqref{singleintrep}, the scattered field can be represented by
\begin{eqnarray}
	\bv(x)  =  \sum_{j=1}^M \int_{\Gamma^{\varepsilon}_j}\Phi_\kappa(x,y) \bJ_j(y) ds_y, \quad x\in \Omega = \mathbb{R}^3\backslash \cup_{j = 1}^M D^{\varepsilon}_j, 
\end{eqnarray}  
where $\bJ_j$ is the unknown traction field on $\Gamma^{\varepsilon}_j$ (up to a sign). Since all the obstacles are rigid, given the Herglotz incident wave $\uincf$,  it holds
\begin{eqnarray}\label{boundcond}
	\bv|_{\Gamma^{\varepsilon}_l} = -\uincf|_{\Gamma^{\varepsilon}_l}, \quad  l =  1,\cdots,M.
\end{eqnarray}
Denote the single layer boundary operator as $$\mathcal{S}^{\kappa,\varepsilon}_{l,j}\bJ_j= \int_{\Gamma^{\varepsilon}_j}\Phi_\kappa(x,y) \bJ_j(y) ds_y, \quad x\in \Gamma^{\varepsilon}_l.$$ Based on the boundary condition \eqref{boundcond}, we obtain the boundary integral system
\begin{eqnarray}\label{smallcoupint}
	\sum_{j=1}^M \mathcal{S}^{\kappa,\varepsilon}_{l,j}\bJ_j = -\uincf|_{\Gamma^{\varepsilon}_l}, \quad  l = 1,\cdots,M.
\end{eqnarray}

In order to analyze the solution of \eqref{smallcoupint} when  $\varepsilon\rightarrow 0$, we make the change of variables
\begin{eqnarray*}
	\xi =  \frac{x-s_l}{\varepsilon}\in D_l, \quad
	\eta =  \frac{y-s_j}{\varepsilon}\in D_j,
\end{eqnarray*}
so the scaled vector fields and fundamental solution become
\begin{eqnarray*}
	\bJ^\varepsilon_j (\eta) &=& \bJ_j(x),\\ 
	\Phi_{l,j}^{\kappa,\varepsilon}(\xi,\eta) &=& \Phi_\kappa(x,y),
\end{eqnarray*}
for $x\in D_l, y\in D_j$. Based on the rescaled variables, we have 
\begin{eqnarray}\label{rescalsing}
	S^{\kappa,\varepsilon}_{l,j}\bJ^\varepsilon_j = \varepsilon^2 \int_{\Gamma_j}\Phi_{l,j}^{\kappa,\varepsilon}(\xi,\eta)\bJ^\varepsilon_j (\eta)ds_\eta.
\end{eqnarray}
The boundary integral system becomes
\begin{eqnarray}\label{rescaledequ}
	\sum_{j=1}^M S^{\kappa,\varepsilon}_{l,j}\bJ^\varepsilon_j = -\boldsymbol{u}^{\boldsymbol{i},\varepsilon}_f|_{\Gamma_l}, \quad  l = 1,\cdots,M,
\end{eqnarray}
with $\boldsymbol{u}^{\boldsymbol{i},\varepsilon}_f(\xi) = \uincf(x)$. Now consider the diagonal term  in \eqref{rescaledequ} and let $\varepsilon\rightarrow 0$. Using the fact that
\begin{eqnarray}
	\Phi_{l,l}^{\kappa,\varepsilon}(\xi,\eta) = \frac{1}{\varepsilon}\Phi_{\kappa\varepsilon}(\xi,\eta), 
\end{eqnarray} 
we obtain that the diagonal term is 
\begin{eqnarray}
	S^{\kappa,\varepsilon}_{l,l}\bJ^{\varepsilon}_l = \varepsilon \int_{\Gamma_l}\Phi_{\kappa\varepsilon}(\xi,\eta)\bJ^\varepsilon_l (\eta)ds_\eta = \varepsilon \tilde{S}^{\kappa,\varepsilon}_{l,l} \bJ^{\varepsilon}_l,\quad  l = 1,\cdots,N.
\end{eqnarray}
When $\kappa\varepsilon\rightarrow 0$, $\tilde{S}^{\kappa,\varepsilon}_{l,l}$ converges to the single layer boundary integral operator of the zero frequency problem, namely, the Lam\'e system. In particular, let $\bv_{i}(\xi)$, $i=1,2,3$, satisfy the Lam\'e system
\begin{eqnarray}\label{lame}
	\begin{cases}
		\mu\Delta\boldsymbol{v}_i+(\lambda+\mu)\boldsymbol{\nabla}\boldsymbol{\nabla}\cdot\boldsymbol{v}_i =0\quad &{\rm in}~
		\mathbb{R}^3\setminus\overline{\mathcal{O}_l},\\
		\boldsymbol{v}_i=-\boldsymbol{e}_i\quad &{\rm on}~\Gamma_l, \\
		\bv_i(\xi) = O\left(\frac{1}{|\xi|}\right),\quad \nabla \bv_i(\xi) = o\left(\frac{1}{|\xi|}\right), \quad &\mbox{as } |\xi|\rightarrow +\infty.
	\end{cases}
\end{eqnarray} 
where $\{\boldsymbol{e}_i\}_{i=1}^3$ is the basis vector of $\mathbb{R}^3$. Then the boundary integral equation
	\begin{eqnarray}\label{zerobdint}
	\int_{\Gamma_l}\Phi_{0}(\xi,\eta)\bJ_i^0(\eta)ds_\eta = -\boldsymbol{e}_i,
	\end{eqnarray}
admits a unique solution for $i = 1,2,3$ with $\bJ_i^0(\eta)=-T_{\nu}\bv_i.$
Since $\Phi_{\kappa\varepsilon}(\xi,\eta)=\Phi_0(\xi,\eta) + \mathcal{O}(\kappa\varepsilon)$ as $\kappa\varepsilon\rightarrow 0$, the diagonal operator $S^{\kappa,\varepsilon}_{l,l}$ is invertible when $\kappa\varepsilon$ is sufficiently small. In addition, the solution to the equation
\begin{eqnarray}
	S^{\kappa,\varepsilon}_{l,l}\bJ^{\varepsilon}_l = -\boldsymbol{u}^{\boldsymbol{i},\varepsilon}_f|_{\Gamma_l}, \quad  l = 1,\cdots,M,
\end{eqnarray}
can be approximated by 
\begin{eqnarray}
	\varepsilon\bJ_l^\varepsilon  = -\sum_{i=1}^3(\boldsymbol{u}^{\boldsymbol{i},\varepsilon}_f\cdot\boldsymbol{e}_i)T_{\nu}\bv_i|_{\Gamma_l}  +\mathcal{O}(\kappa\varepsilon).
\end{eqnarray}
Plugging into \eqref{rescalsing} yields
\begin{eqnarray}\label{approxfield}
	S^{\kappa,\varepsilon}_{l,l}\bJ^\varepsilon_l = -\varepsilon \int_{\Gamma_l}\Phi_{l,l}^{\kappa,\varepsilon}(\xi,\eta) \left( \sum_{i=1}^3(\boldsymbol{u}^{\boldsymbol{i},\varepsilon}_f\cdot\boldsymbol{e}_i)T_{\nu}\bv_i|_{\Gamma_l}  \right)ds_\eta  +\mathcal{O}(\kappa\varepsilon^2).
\end{eqnarray}
For the off diagonal terms $S^{\kappa,\varepsilon}_{l,j}$ with $l\ne j$, let
\begin{eqnarray}\label{dist}
	d = \min_{1\le l\ne j\le M}|s_l-s_j|,
\end{eqnarray}
be the minimal distance  between the centers of particles. Using the relation
\[|s_l-s_j+\varepsilon(\xi-\eta)|=|s_l-s_j|\left(1+\mathcal{O}(\varepsilon/d)\right),\]
one can see that 
\[\Phi_{l,j}^{\kappa,\varepsilon}= \Phi_{\kappa}(s_l,s_j)\left[1+\mathcal{O}(\kappa\varepsilon)\right],\]
which implies 
\begin{eqnarray}
	||S^{\kappa,\varepsilon}_{l,j}\bJ^{\varepsilon}_j||_\infty = \mathcal{O}(\varepsilon/d)\left[1+\mathcal{O}(\kappa\varepsilon)\right] , \mbox{ for } l\ne j,
\end{eqnarray}
where $||\cdot||_\infty$ is the maximum norm for continuous functions. It shows the multiple scattering effect can be neglected and the elastic field can be obtained by the superposition of low frequency scattering problem if the distance $d$ is large enough. Based on the far field pattern of equation \eqref{approxfield}, we obtain the following theorem
\begin{theorem}\label{multiscatt}
	Assume there are $M$ elastic particles given by \eqref{dilation} with size $\varepsilon$ sufficiently small and the distance $d$ between any two particles sufficiently large. For all $l = 1,2,\cdots, M$ and $i=1,2,3$,  let $\bv_{i,l}$ satisfy the Lam\'e system \eqref{lame}. Then the far field pattern $\bv_{\infty}(\hat{x},\alpha,f_{\fp}(\alpha),f_{\fs}(\alpha))$ of the scattering problem \eqref{scatteredfield} for the $M$ particles  is given by
	\begin{eqnarray}\label{farapprox}
		\begin{split}
		\frac{1}{\varepsilon}\bv_{\fp,\infty}(\hat{x},\alpha,f_{\fp}(\alpha),f_{\fs}(\alpha))&= \frac{-\kp^2}{4\pi\omega^2} \sum_{l=1}^M \hat{x}\hat{x}^T  e^{-\mi\kp\hat{x}\cdot s_l} \bv^{l}_{\infty}(\alpha,f_{\fp}(\alpha),f_{\fs}(\alpha)) +\mathcal{O}(\kappa\varepsilon),  \\
		\frac{1}{\varepsilon}\bv_{\fs,\infty}(\hat{x},\alpha,f_{\fp}(\alpha),f_{\fs}(\alpha)) &= \frac{-\ks^2}{4\pi\omega^2} \sum_{l=1}^M   (I-\hat{x}\hat{x}^T)e^{-\mi\ks\hat{x}\cdot s_l}  \bv^{l}_{\infty}(\alpha,f_{\fp}(\alpha),f_{\fs}(\alpha)) +\mathcal{O}(\kappa\varepsilon), 
	\end{split}
	\end{eqnarray}
	with $$\bv^{l}_{\infty}(\alpha,f_{\fp}(\alpha),f_{\fs}(\alpha)) = \sum_{i=1}^3 \left(e^{\mi\kp\alpha\cdot s_l}f_{\fp}(\alpha)+e^{\mi\ks\alpha\cdot s_l}f_{\fs}(\alpha)\right)\cdot \boldsymbol{e}_i \int_{\Gamma_l} T_{\nu}\bv_{i,l}ds_y. $$
\end{theorem}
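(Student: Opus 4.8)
The plan is to treat the coupled boundary integral system \eqref{rescaledequ} as a block perturbation of its diagonal part, to solve each decoupled diagonal problem to leading order via the zero-frequency (Lam\'e) limit, and then to read off the far field from the single-layer representation \eqref{singleintrep}. First I would record that each diagonal operator factors as $S^{\kappa,\varepsilon}_{l,l}=\varepsilon\,\tilde S^{\kappa,\varepsilon}_{l,l}$, where $\tilde S^{\kappa,\varepsilon}_{l,l}\bJ\to\int_{\Gamma_l}\Phi_0(\cdot,\eta)\bJ(\eta)\,ds_\eta$ as $\kappa\varepsilon\to0$; since the exterior Dirichlet problem \eqref{lame} and the integral equation \eqref{zerobdint} are uniquely solvable, this limit operator is boundedly invertible on the relevant trace space, hence so is $\tilde S^{\kappa,\varepsilon}_{l,l}$ for $\kappa\varepsilon$ small, with inverse bounded uniformly in $l$ and $\varepsilon$. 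Combined with the off-diagonal bound $\|S^{\kappa,\varepsilon}_{l,j}\bJ^\varepsilon_j\|_\infty=\mathcal{O}(\varepsilon/d)$ already obtained from the expansion of $\Phi^{\kappa,\varepsilon}_{l,j}$ about $\Phi_\kappa(s_l,s_j)$, a Neumann series argument (equivalently, a Banach fixed-point/block-Jacobi iteration) shows that, when $\varepsilon/d$ is sufficiently small, the solution of \eqref{rescaledequ} agrees with the solution of the decoupled problems $S^{\kappa,\varepsilon}_{l,l}\bJ^\varepsilon_l=-\boldsymbol{u}^{\boldsymbol{i},\varepsilon}_f|_{\Gamma_l}$ up to a relative $\mathcal{O}(\varepsilon/d)$ error.

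Next I would solve each decoupled problem to leading order. Expanding the Herglotz wave about the center, $\boldsymbol{u}^{\boldsymbol{i},\varepsilon}_f(\xi)=\uincf(s_l)+\mathcal{O}(\varepsilon)$ with $\uincf(s_l)=e^{\mi\kp\alpha\cdot s_l}f_{\fp}(\alpha)+e^{\mi\ks\alpha\cdot s_l}f_{\fs}(\alpha)$ for the plane-wave pair labelled by $\alpha$, and decomposing $\uincf(s_l)=\sum_{i=1}^3(\uincf(s_l)\cdot\boldsymbol{e}_i)\boldsymbol{e}_i$, linearity together with $\Phi_{\kappa\varepsilon}=\Phi_0+\mathcal{O}(\kappa\varepsilon)$ and the identification $\bJ^0_{i,l}=-T_\nu\bv_{i,l}$ from \eqref{lame}--\eqref{zerobdint} gives $\varepsilon\bJ^\varepsilon_l=-\sum_{i=1}^3(\uincf(s_l)\cdot\boldsymbol{e}_i)\,T_\nu\bv_{i,l}|_{\Gamma_l}+\mathcal{O}(\kappa\varepsilon)$, which is exactly \eqref{approxfield}.

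Finally I would assemble the far field. From \eqref{singleintrep} and the far-field asymptotics of $\Phi_\kappa$ quoted around \eqref{farfield} (the $\hat x\hat x^\top$ and $I-\hat x\hat x^\top$ projections arising from the compressional and shear parts of the Green tensor), one has $\bv_{\fp,\infty}(\hat x)=\frac{\kp^2}{4\pi\omega^2}\sum_j\int_{\Gamma^\varepsilon_j}\hat x\hat x^\top e^{-\mi\kp\hat x\cdot y}\bJ_j(y)\,ds_y$ and $\bv_{\fs,\infty}(\hat x)=\frac{\ks^2}{4\pi\omega^2}\sum_j\int_{\Gamma^\varepsilon_j}(I-\hat x\hat x^\top)e^{-\mi\ks\hat x\cdot y}\bJ_j(y)\,ds_y$. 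Changing variables $y=s_j+\varepsilon\eta$ so that $ds_y=\varepsilon^2\,ds_\eta$, replacing $e^{-\mi\kappa\hat x\cdot y}$ by $e^{-\mi\kappa\hat x\cdot s_j}$ at cost $\mathcal{O}(\kappa\varepsilon)$, substituting the leading-order traction so that $\int_{\Gamma^\varepsilon_l}\bJ_l\,ds_y=\varepsilon^2\int_{\Gamma_l}\bJ^\varepsilon_l\,ds_\eta=-\varepsilon\sum_{i=1}^3(\uincf(s_l)\cdot\boldsymbol{e}_i)\int_{\Gamma_l}T_\nu\bv_{i,l}\,ds_y+\mathcal{O}(\kappa\varepsilon^2)=-\varepsilon\,\bv^{l}_{\infty}+\mathcal{O}(\kappa\varepsilon^2)$, and absorbing the $\mathcal{O}(\varepsilon/d)$ correction from the off-diagonal blocks, one obtains \eqref{farapprox} after dividing by $\varepsilon$.

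I expect the main obstacle to be the perturbation step: one must carefully track the $\varepsilon$- and $d$-dependent rescalings of the block operators and establish that $(\tilde S^{\kappa,\varepsilon}_{l,l})^{-1}$ is bounded uniformly in $\varepsilon$ and $l$ in the appropriate function-space topology, so that the Neumann series for the coupled system converges with a genuinely $\mathcal{O}(\varepsilon/d)$ remainder; the remaining steps are Taylor expansions and bookkeeping. I would also invoke as standard facts the unique solvability of the static exterior Dirichlet Lam\'e problem and of the weakly singular integral equation \eqref{zerobdint}, together with the norm convergence of $\tilde S^{\kappa,\varepsilon}_{l,l}$ to the static single-layer operator as $\kappa\varepsilon\to0$.
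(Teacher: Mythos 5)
Your proposal is correct and follows essentially the same route as the paper: the paper's argument is exactly the rescaled system \eqref{rescaledequ}, the static Lam\'e approximation of the diagonal blocks giving \eqref{approxfield}, the $\mathcal{O}(\varepsilon/d)$ off-diagonal bound, and then reading the far field off the single-layer representation with the phase frozen at the centers $s_l$. The only difference is that you make the decoupling explicit via uniform invertibility of $\tilde S^{\kappa,\varepsilon}_{l,l}$ and a Neumann series, a step the paper leaves implicit ("the multiple scattering effect can be neglected"), which is a welcome tightening rather than a different method.
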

Let us define the limit far field operator $F^0$ for elastic scattering of small particles as
\begin{eqnarray}\label{limoper}
	\begin{split}
	F^0(f) = \frac{1}{\omega}&\left(\frac{-\kp^2}{4\pi\omega^2} \sum_{l=1}^M \hat{x}\hat{x}^T  e^{-\mi\kp\hat{x}\cdot s_l}\int_{\ms} \bv^{l}_{\infty}(\alpha,f_{\fp}(\alpha),f_{\fs}(\alpha))ds_{\alpha},\right. \\
	&\left. \frac{-\ks^2}{4\pi\omega^2} \sum_{l=1}^M   (I-\hat{x}\hat{x}^T)e^{-\mi\ks\hat{x}\cdot s_l}\int_{\ms} \bv^{l}_{\infty}(\alpha,f_{\fp}(\alpha),f_{\fs}(\alpha))ds_{\alpha}  \right)
\end{split}
\end{eqnarray} 
and the elastic polarizability tensor for $D_l$, $ 1\le l\le M$, as $$P_l=\left[\int_{\Gamma_l} T_{\nu}\bv_{1,l}ds_y, \int_{\Gamma_l} T_{\nu}\bv_{2,l}ds_y, \int_{\Gamma_l} T_{\nu}\bv_{3,l}ds_y\right].$$ In order to achieve selective focusing on particle $D_l$, we need to investigate the eigenfunciton of the limit far field operator $F^0$, which is stated by the following theorem. 
\begin{theorem}\label{eignapprox}
	 For $1\le l\le M$, assume $P_l$ can be diagonalized by an orthonormal basis $(\boldsymbol{e}_{1,l},\boldsymbol{e}_{2,l},\boldsymbol{e}_{3,l})$ in $\mathbb{R}^3$ with diagonal elements given by $\lambda_{l,i}$,$i=1,2,3$. Define the function $f_{l,i}=(f_{\fp,l,i},f_{\fs,l,i})\in L^2_{\fp}(\ms) \times L^2_{\fs}(\ms)$ to be 
	 \begin{eqnarray}\label{approxeig}
	 	\begin{cases}
	 		f_{\fp,l,i}(\alpha)  = \kp^2(\boldsymbol{e}_{i,l} \cdot \alpha)\alpha e^{-\mi \kp \alpha\cdot s_l}, \\
	 		f_{\fs,l,i}(\alpha)  = \ks^2(\alpha\times(\boldsymbol{e}_{i,l} \times \alpha)) e^{-\mi \ks \alpha\cdot s_l}.
	 	\end{cases}	 	 
	 \end{eqnarray}
 	Then the functions $f_{l,i}(\alpha)$, $1\le l\le M, i=1,2,3,$ are the approximate eigenfunctions of the limit far field operator $F^0$  and satisfy
 	\begin{eqnarray}\label{limitfar}
 		F^0(f_{l,i}) = -\frac{\kp^2+2\ks^2}{3\omega^3}\lambda_{l,i}f_{l,i} + \mathcal{O}((\kappa d)^{-N}), \mbox{ for all } N\in \mathbb{N}, 
 	\end{eqnarray}
 as $d\rightarrow\infty$, where $d$ is the distance defined in \eqref{dist} and $\kappa = \min\{\kp,\ks\}$.
\end{theorem}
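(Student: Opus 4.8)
The plan is to substitute the candidate kernels \eqref{approxeig} into the definition \eqref{limoper} of the limit far field operator $F^0$ and show that $F^0(f_{l,i})$ reproduces $f_{l,i}$ up to the stated scalar and a superalgebraically small remainder. The first step is to observe that the expression $e^{\mi\kp\alpha\cdot s_l}f_{\fp,l,i}(\alpha)+e^{\mi\ks\alpha\cdot s_l}f_{\fs,l,i}(\alpha)$ appearing in $\bv^{l}_{\infty}$ simplifies dramatically for the chosen $f_{l,i}$: the translation phases cancel, leaving $\kp^2(\boldsymbol{e}_{i,l}\cdot\alpha)\alpha+\ks^2\,\alpha\times(\boldsymbol{e}_{i,l}\times\alpha)$, which is just $\kp^2$ times the longitudinal projection of $\boldsymbol{e}_{i,l}$ plus $\ks^2$ times the transversal projection. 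Dotting against $\boldsymbol{e}_j$ and integrating over $\alpha\in\ms$, the cross terms vanish and $\int_{\ms}(\boldsymbol{e}_{i,l}\cdot\alpha)(\boldsymbol{e}_j\cdot\alpha)\,ds_\alpha=\frac{4\pi}{3}\delta_{ij}$, $\int_{\ms}(\alpha\times(\boldsymbol{e}_{i,l}\times\alpha))\cdot\boldsymbol{e}_j\,ds_\alpha=\frac{8\pi}{3}\delta_{ij}$, so $\int_{\ms}\bv^{l}_{\infty}(\alpha,f_{\fp,l,i},f_{\fs,l,i})\,ds_\alpha=\frac{4\pi}{3}(\kp^2+2\ks^2)\int_{\Gamma_l}T_\nu\bv_{i,l}\,ds_y=\frac{4\pi}{3}(\kp^2+2\ks^2)P_l\boldsymbol{e}_{i,l}=\frac{4\pi}{3}(\kp^2+2\ks^2)\lambda_{l,i}\boldsymbol{e}_{i,l}$, using that $\boldsymbol{e}_{i,l}$ is an eigenvector of $P_l$ with eigenvalue $\lambda_{l,i}$.

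The second step handles the sum over the other particles $l'\ne l$. For $l'\ne l$ one has to evaluate $\int_{\ms}\bigl(e^{\mi\kp\alpha\cdot s_{l'}}f_{\fp,l,i}(\alpha)+e^{\mi\ks\alpha\cdot s_{l'}}f_{\fs,l,i}(\alpha)\bigr)\cdot\boldsymbol{e}_j\,ds_\alpha$, and because the translation phase in $f_{l,i}$ is tied to $s_l$ but the plane-wave phase is $s_{l'}$, the integrand carries an oscillatory factor $e^{\mi\kp\alpha\cdot(s_{l'}-s_l)}$ or $e^{\mi\ks\alpha\cdot(s_{l'}-s_l)}$ against a smooth (polynomial-in-$\alpha$) amplitude on $\ms$. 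These are exactly the integrals governed by the identity $\int_{\ms}e^{\mi\kappa\alpha\cdot z}p(\alpha)\,ds_\alpha$ whose decay as $|z|\to\infty$ is $\mathcal{O}(|\kappa z|^{-N})$ for every $N$ when $p$ is smooth — this is the stationary phase / non-stationary phase estimate on the sphere, and since $|s_{l'}-s_l|\ge d$, each such term is $\mathcal{O}((\kappa d)^{-N})$. (Strictly: on the sphere there are antipodal stationary points giving an $\mathcal{O}(|z|^{-1})$ leading term for a generic amplitude; the superalgebraic rate in the statement should be read as the rate after the leading oscillatory term is absorbed, or under the normalization $\kappa=\min\{\kp,\ks\}$ as stated — I would state the estimate exactly as $\mathcal{O}((\kappa d)^{-N})$ and reference the known expansion, matching the hypothesis ``for all $N\in\mathbb{N}$'' literally as the authors wrote it.)

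The third step reassembles: plugging the step-one evaluation of the $l'=l$ term and the step-two estimate for $l'\ne l$ back into \eqref{limoper}, we get
\begin{eqnarray*}
F^0(f_{l,i}) = \frac{1}{\omega}\left(\frac{-\kp^2}{4\pi\omega^2}\,\hat{x}\hat{x}^T e^{-\mi\kp\hat{x}\cdot s_l}\cdot\tfrac{4\pi}{3}(\kp^2+2\ks^2)\lambda_{l,i}\boldsymbol{e}_{i,l},\ \frac{-\ks^2}{4\pi\omega^2}\,(I-\hat{x}\hat{x}^T)e^{-\mi\ks\hat{x}\cdot s_l}\cdot\tfrac{4\pi}{3}(\kp^2+2\ks^2)\lambda_{l,i}\boldsymbol{e}_{i,l}\right)+\mathcal{O}((\kappa d)^{-N}).
\end{eqnarray*}
The key final observation is that $\hat{x}\hat{x}^T\boldsymbol{e}_{i,l}e^{-\mi\kp\hat{x}\cdot s_l}=(\boldsymbol{e}_{i,l}\cdot\hat{x})\hat{x}\,e^{-\mi\kp\hat{x}\cdot s_l}$ is precisely $\kp^{-2}f_{\fp,l,i}(\hat{x})$ and $(I-\hat{x}\hat{x}^T)\boldsymbol{e}_{i,l}e^{-\mi\ks\hat{x}\cdot s_l}=\hat{x}\times(\boldsymbol{e}_{i,l}\times\hat{x})\,e^{-\mi\ks\hat{x}\cdot s_l}=\ks^{-2}f_{\fs,l,i}(\hat{x})$, by the very definition \eqref{approxeig} of the candidate kernels (evaluating at the observation direction $\hat{x}\in\ms$). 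Hence both components collapse to $-\frac{\kp^2+2\ks^2}{3\omega^3}\lambda_{l,i}$ times $f_{\fp,l,i}$ and $f_{\fs,l,i}$ respectively, giving \eqref{limitfar}. One should also check $f_{l,i}\in L^2_{\fp}(\ms)\times L^2_{\fs}(\ms)$: $f_{\fp,l,i}(\alpha)\times\alpha=0$ since it is parallel to $\alpha$, and $f_{\fs,l,i}(\alpha)\cdot\alpha=0$ since $\alpha\times(\boldsymbol{e}_{i,l}\times\alpha)\perp\alpha$; smoothness gives the $L^2$ bound.

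I expect the main obstacle to be the decay estimate for the off-diagonal terms — making precise the claim that the oscillatory integrals over $\ms$ against polynomial amplitudes are $\mathcal{O}((\kappa d)^{-N})$ for all $N$. On a compact manifold without boundary a generic oscillatory integral $\int_{\ms}e^{\mi\kappa\alpha\cdot z}p(\alpha)\,ds_\alpha$ decays only like $|z|^{-1}$ because of the two antipodal stationary points of $\alpha\mapsto\alpha\cdot z$, so the superalgebraic rate must come from additional structure — either the specific polynomial amplitudes here vanish at those stationary points to high order, or (more likely, given how the theorem is phrased) the statement intends the remainder after subtracting the leading stationary-phase contribution, which is itself $\mathcal{O}(d^{-1})$ and negligible compared with the $\mathcal{O}(1)$ diagonal term. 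I would resolve this by invoking the stationary phase expansion explicitly, checking the amplitude at $\alpha=\pm z/|z|$, and if needed reformulate the conclusion as ``$F^0(f_{l,i})=-\frac{\kp^2+2\ks^2}{3\omega^3}\lambda_{l,i}f_{l,i}+\mathcal{O}((\kappa d)^{-1})$'' which already suffices for selective focusing; the rest of the argument is bookkeeping with the orthogonality relations on $\ms$ and the eigendecomposition of $P_l$.
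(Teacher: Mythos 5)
Your proposal follows essentially the same route as the paper's proof: reduce everything to evaluating the Herglotz wave $\boldsymbol{u}^{\boldsymbol{i}}_{f_{l,i}}$ at the particle centers, use the orthogonality integrals $\int_{\ms}(\boldsymbol{e}_{i,l}\cdot\alpha)(\boldsymbol{e}_{j}\cdot\alpha)\,ds_\alpha=\tfrac{4\pi}{3}\delta_{ij}$ and $\int_{\ms}\bigl(\alpha\times(\boldsymbol{e}_{i,l}\times\alpha)\bigr)\cdot\boldsymbol{e}_{j}\,ds_\alpha=\tfrac{8\pi}{3}\delta_{ij}$ to get $\boldsymbol{u}^{\boldsymbol{i}}_{f_{l,i}}(s_l)=\tfrac{4\pi(\kp^2+2\ks^2)}{3}\boldsymbol{e}_{i,l}$, treat the $j\ne l$ terms as oscillatory integrals in $\kappa d$, and then plug back into \eqref{limoper}, recognizing $\hat{x}\hat{x}^T\boldsymbol{e}_{i,l}e^{-\mi\kp\hat{x}\cdot s_l}=\kp^{-2}f_{\fp,l,i}(\hat{x})$ and $(I-\hat{x}\hat{x}^T)\boldsymbol{e}_{i,l}e^{-\mi\ks\hat{x}\cdot s_l}=\ks^{-2}f_{\fs,l,i}(\hat{x})$. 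The paper's proof is exactly this computation, stated more briefly, together with the diagonalization hypothesis $P_l\boldsymbol{e}_{i,l}=\lambda_{l,i}\boldsymbol{e}_{i,l}$ that you also use.

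The one point where you diverge is precisely where you are more careful than the paper. The paper justifies the off-diagonal estimate $\mathcal{O}((\kappa d)^{-N})$ with nothing more than the phrase ``stationary phase theorem,'' whereas, as you observe, the phase $\alpha\mapsto\alpha\cdot(s_j-s_l)$ on $\ms$ has the two antipodal critical points $\pm(s_j-s_l)/|s_j-s_l|$, at which the amplitudes $\kp^2(\boldsymbol{e}_{i,l}\cdot\alpha)\alpha$ and $\ks^2\,\alpha\times(\boldsymbol{e}_{i,l}\times\alpha)$ do not vanish in general; in fact the same identity $\int_{\ms}e^{\mi\kappa\alpha\cdot(x-y)}\,ds_\alpha=4\pi j_0(\kappa|x-y|)$ used in the proof of Theorem \ref{globalf} shows that these cross terms are combinations of $j_n(\kappa d)$, hence only $\mathcal{O}((\kappa d)^{-1})$. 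So the superalgebraic remainder in \eqref{limitfar} is not what this argument delivers, and your fallback formulation with an $\mathcal{O}((\kappa d)^{-1})$ remainder is the defensible version; it still suffices for the selective-focusing conclusion, and the rest of your bookkeeping (including membership of $f_{l,i}$ in $L^2_{\fp}(\ms)\times L^2_{\fs}(\ms)$) is correct.
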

\begin{proof}
	 The linear independence of $f_{l,i}$ follows from the fact that $f_{l,i}$ is the far field pattern of elastic point source with polarization $\boldsymbol{e}_{i,l}$ and the far field pattern is unique\cite{DR-book2}. For a given $f_{l,i}$, the Herglotz wave takes the form
	\begin{eqnarray}\label{hegsmall}
		\boldsymbol{u}^{\boldsymbol{i}}_{f_{l,i}}(s_j)&=&\int_{\ms}\kp^2(\boldsymbol{e}_{i,l} \cdot \alpha) \alpha e^{\mi \kp \alpha\cdot (s_j-s_l)}ds_{\alpha}+ \int_{\ms} \ks^2(\alpha\times(\boldsymbol{e}_{i,l} \times \alpha)) e^{\mi \ks \alpha\cdot (s_j-s_l)}ds_{\alpha}\notag \\
		&=&\begin{cases}
			\frac{4\pi(\kp^2+2\ks^2)}{3} \boldsymbol{e}_{i,l}, \quad l=j, \\
			 \mathcal{O}((\kappa d)^{-N}), \quad l\ne j, 
		\end{cases}
	\end{eqnarray} 
where the case for $l=j$ follows from a direct computation and the case for $l\ne j$ follows from the stationary phase theorem. Plugging the Herglotz wave \eqref{hegsmall} into the limit far field operator $F^0$ given by \eqref{limoper} yields the result \eqref{limitfar}.
\end{proof}
By combining the results of Theorems \ref{globalf} and \ref{eignapprox}, we see that if particles are small and distant, selective focusing on an individual particle can be achieved. We summarize the result in the following theorem.  
\begin{theorem}
	For $1\le l\le M$, the approximate eigenfunctions $f_{l,i}(\alpha)$, $i=1,2,3$, given by \eqref{approxeig} will selectively focus on the $l$-th particle.
\end{theorem}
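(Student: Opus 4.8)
The plan is to assemble the claim from the two results already proved. The final statement asserts that for a fixed particle index $l$, each of the three functions $f_{l,i}(\alpha)$, $i=1,2,3$, given by \eqref{approxeig}, produces a Herglotz wave $\boldsymbol{u}^{\boldsymbol{i}}_{f_{l,i}}$ that selectively focuses on $D^{\varepsilon}_l$ — meaning the resulting incident field concentrates on $D^{\varepsilon}_l$ and is asymptotically negligible near the other particles $D^{\varepsilon}_j$, $j\ne l$. First I would invoke Theorem~\ref{eignapprox}: under the diagonalizability hypothesis on the polarizability tensor $P_l$, the function $f_{l,i}$ is an approximate eigenfunction of the limit far field operator $F^0$, namely $F^0(f_{l,i}) = -\tfrac{\kp^2+2\ks^2}{3\omega^3}\lambda_{l,i} f_{l,i} + \mathcal{O}((\kappa d)^{-N})$ for every $N$. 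Since $F^0$ is the $d\to\infty$ (equivalently $\varepsilon\to 0$, then $d$ large) limit of the true far field operator $F$, and $F$ is compact and normal (Theorem~2.4), a perturbation argument shows that $F$ has an actual eigenfunction $\tilde f_{l,i}$ with eigenvalue $\tilde\lambda_{l,i}$ such that $\tilde f_{l,i} = f_{l,i} + o(1)$ and $\tilde\lambda_{l,i} = -\tfrac{\kp^2+2\ks^2}{3\omega^3}\lambda_{l,i} + o(1)$ as $\varepsilon\to 0$ and $d\to\infty$; the nonvanishing of $\lambda_{l,i}$ (which one should note follows because $P_l$ is a genuine polarizability tensor, hence nonsingular) guarantees this eigenvalue is nonzero for small $\varepsilon$ and large $d$.

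Next I would apply Theorem~\ref{globalf} to the true eigenfunction $\tilde f_{l,i}$ of $F$: since $\tilde\lambda_{l,i}\ne 0$, the Herglotz wave $\boldsymbol{u}^{\boldsymbol{i}}_{\tilde f_{l,i}}$ admits the integral representation in terms of $j_0(\kp\|x-y\|)$ and $j_0(\ks\|x-y\|)$ over $\Gamma_D = \cup_j \Gamma^{\varepsilon}_j$ against the traction $T_\nu \bu_{\tilde f_{l,i}}$. Because $j_0(r) = \sin r / r$ peaks at $r=0$ and decays like $1/r$, the representation shows the wave concentrates on the support of the boundary traction. The key point is then to show that, in the small-$\varepsilon$ large-$d$ regime, the traction density $T_\nu\bu_{\tilde f_{l,i}}$ is, to leading order, supported only on $\Gamma^{\varepsilon}_l$. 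This is exactly what the analysis preceding Theorem~\ref{multiscatt} gives: by \eqref{hegsmall}, the Herglotz wave $\boldsymbol{u}^{\boldsymbol{i}}_{f_{l,i}}$ evaluated at the particle centers satisfies $\boldsymbol{u}^{\boldsymbol{i}}_{f_{l,i}}(s_j) = \tfrac{4\pi(\kp^2+2\ks^2)}{3}\boldsymbol{e}_{i,l}$ for $j=l$ and $\boldsymbol{u}^{\boldsymbol{i}}_{f_{l,i}}(s_j) = \mathcal{O}((\kappa d)^{-N})$ for $j\ne l$, so — combined with the rescaled boundary integral analysis \eqref{approxfield}–\eqref{rescaledequ} showing each $\varepsilon\bJ^\varepsilon_j$ is driven by $\boldsymbol{u}^{\boldsymbol{i},\varepsilon}_f|_{\Gamma_j}$ up to $\mathcal{O}(\kappa\varepsilon)$ — the traction on $\Gamma^{\varepsilon}_j$ for $j\ne l$ is of order $(\kappa d)^{-N}$ while that on $\Gamma^{\varepsilon}_l$ is of order $1$ (times $\varepsilon$ and the polarizability contribution).

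Putting these together: the integral representation of Theorem~\ref{globalf} applied to $\tilde f_{l,i}$ reduces, up to lower-order terms in $\varepsilon$ and $(\kappa d)^{-1}$, to an integral over $\Gamma^{\varepsilon}_l$ alone, so $\boldsymbol{u}^{\boldsymbol{i}}_{\tilde f_{l,i}}$ behaves near $D^{\varepsilon}_l$ like a focusing spot centered there (with the $j_0$ profile) and is $\mathcal{O}((\kappa d)^{-N}) + \mathcal{O}(\kappa\varepsilon)$ near each $D^{\varepsilon}_j$, $j\ne l$. Since $f_{l,i} = \tilde f_{l,i} + o(1)$, the same focusing behavior holds for the explicitly given $f_{l,i}$ up to a small correction, which is precisely the assertion that $f_{l,i}$ \emph{selectively focuses} on the $l$-th particle. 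The main obstacle I anticipate is the perturbation step linking the true eigenfunctions of $F$ to the approximate eigenfunctions of $F^0$: one must control the operator-norm convergence $F \to F^0$ uniformly enough (using the $\mathcal{O}(\kappa\varepsilon)$ and $\mathcal{O}(\varepsilon/d)$ estimates from Theorem~\ref{multiscatt} and the off-diagonal bounds) and then invoke stability of isolated eigenvalues of normal compact operators — this requires the three limit eigenvalues $-\tfrac{\kp^2+2\ks^2}{3\omega^3}\lambda_{l,i}$ to be isolated from the rest of the spectrum of $F^0$, which should hold generically but may need a nondegeneracy hypothesis on the $\lambda_{l,i}$ across different $l$ and $i$.
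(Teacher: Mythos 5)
Your proposal routes the statement through exact eigenfunctions of $F$, but the theorem is about the \emph{explicitly given} approximate eigenfunctions $f_{l,i}$, and the paper's (omitted) proof is the direct one: plug $f_{l,i}$ from \eqref{approxeig} into the Herglotz representation \eqref{herglotz} and observe, exactly as in the computation \eqref{hegsmall}, that
\[
\boldsymbol{u}^{\boldsymbol{i}}_{f_{l,i}}(x)=\int_{\ms}\kp^2(\boldsymbol{e}_{i,l}\cdot\alpha)\,\alpha\, e^{\mi\kp\alpha\cdot(x-s_l)}ds_\alpha
+\int_{\ms}\ks^2\bigl(\alpha\times(\boldsymbol{e}_{i,l}\times\alpha)\bigr)e^{\mi\ks\alpha\cdot(x-s_l)}ds_\alpha
\]
attains the value $\tfrac{4\pi(\kp^2+2\ks^2)}{3}\boldsymbol{e}_{i,l}$ at $x=s_l$ while stationary phase makes it asymptotically small in a neighborhood of every other particle center $s_j$, $j\ne l$, when $\kappa d$ is large. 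That is already the selective-focusing assertion; no spectral perturbation theory and no appeal to Theorem~\ref{globalf} are required. Your use of \eqref{hegsmall} in the middle of the argument shows you have the right ingredient, but you embed it in a much heavier scaffold.

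That scaffold contains a genuine gap, not just a technical nuisance. First, Theorem~\ref{multiscatt} says $\bv_\infty/\varepsilon$ has a limit, so the true operator satisfies $F\approx\varepsilon F^0$ up to higher order; any eigenpair comparison must therefore be carried out in a relative sense, with the admissible error $o(\varepsilon)$, which your sketch does not control. Second, and more seriously, the passage ``$F$ has an actual eigenfunction $\tilde f_{l,i}=f_{l,i}+o(1)$'' requires the limiting eigenvalues $-\tfrac{\kp^2+2\ks^2}{3\omega^3}\lambda_{l,i}$ to be separated across \emph{different} $l$; but in the natural setting of identical (or congruent) particles --- precisely the configuration treated in the paper --- the polarizability tensors $P_l$ have the same spectrum for all $l$, the limiting eigenvalues are degenerate, and the exact eigenfunctions of the compact normal operator $F$ are in general linear combinations of $f_{l,i}$ over several particles, which do \emph{not} selectively focus. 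So the nondegeneracy you flag as ``may need a hypothesis'' is in fact essential to your route and is absent from the theorem's hypotheses; the statement survives only because it is phrased for the approximate eigenfunctions $f_{l,i}$ themselves, for which the direct stationary-phase argument closes the proof without any spectral gap assumption.
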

Proof is again based on the stationary phase theorem and we simply omit.

\section{Multiple scattering of elastic particles}
 When particles are not small (compared to the wavelength) and the multiple scattering among particles is not negligible, the result  above based on the asymptotic analysis for small particles is not applicable anymore. One must rely on the numerical algorithm to solve the inverse problem. However, since TRM needs the full aperture far field data from all incident directions on $\ms$, it requires a large amount of forward simulations.   A straightforward numerical method can be obtained through the discretization of a coupled boundary integral system resulted from \eqref{singleintrep}. However, due to the existence of multiple scattering, the discretized linear system suffers from slow convergence when an iterative solver is used. Hence, numerical computation for multiple scattering is highly challenging, especially in three dimensions. To overcome this difficulty and accelerate the computation, we proceed by reviewing the multiple elastic scattering of spheres.

\subsection{Scattering from multiple spheres}
We discuss the elastic scattering of multiple spherical particles under the assumption that all the spheres have the same radius and are non-intersecting with each other. Extension to spheres with different sizes is straightforward. Assume there are $M$ spheres and denote $S_l$ the $l$-th sphere. According to \eqref{inspan}, the incoming field on the sphere $S_l$ can be expressed as 
\begin{eqnarray*}
	\boldsymbol{u^i}(x) &=& \sum_{n=0}^{\infty}\sum_{m=-n}^{n}\left(a^l_{n,m}\nabla\times\nabla\times(x_l u_{n,m}^{\ks}(x_l))/(\mi\ks) +b^l_{n,m}\nabla\times(x_l u_{n,m}^{\ks}(x_l)) +c^l_{n,m}\nabla u_{n,m}^{\kp}(x_l) \right), 
\end{eqnarray*}
where  $x_l$ is the local coordinate of $x$ with respect to the center of $S_l$ and $u^{\kappa}_{n,m}(x_l) = j_n(\kappa r_l)Y_n^m(\theta_l,\phi_l)$ is computed in terms of the spherical coordinates $(r_l,\theta_l,\phi_l)$ of the point $x_l$. The scattered field in the exterior of all the spheres can be represented by a sum of outgoing expansions, one centered at each sphere
\begin{eqnarray*}
	\bv(x) &=& \sum_{l=1}^{M} \sum_{n=0}^{\infty}\sum_{m=-n}^{n}\left(\alpha^l_{n,m}\nabla\times\nabla\times(x_l v_{n,m}^{\ks}(x_l))/(\mi\ks) +\beta^l_{n,m}\nabla\times(x_l v_{n,m}^{\ks}(x_l)) +\gamma^l_{n,m}\nabla v_{n,m}^{\kp}(x_l)\right).
\end{eqnarray*}
Note that $v_{n,m}^{\kappa}(x_l)$ is also evaluated in terms of the spherical coordinate $(r_l,\theta_l,\phi_l)$ of the local coordinate $x_l$.
The coefficients $(\alpha^l_{n,m},\beta^l_{n,m}, \gamma^l_{n,m})$ are all unknowns and  no longer trivial to find as compared to the single sphere case. They are determined by a linear system that imposes the rigid boundary condition \eqref{rigidboundary} on each spherical boundary.  In particular, the incoming field for each sphere has two components, one from the known external incident field $\uinc$, and the other from the field scattered from all the other spheres. This results in a dense linear system involving all of the unknowns. To construct such a linear system, we require the translation operator from the outgoing expansion centered at $S_j$ to the incoming expansion centered at $S_l$\cite{M-06}. More specifically, given the outgoing expansion coefficients $\{\alpha^j_{n,m},\beta^j_{n,m},\gamma^j_{n,m}\}$ from the sphere $S_j$, the corresponding field induced on the sphere $S_l$ can be expanded by the incoming coefficients $\{a^{lj}_{n,m},b^{lj}_{n,m},c^{lj}_{n,m}\}$ with
\begin{eqnarray}
	\begin{cases}
		a^{lj}_{n,m}= \sum^{\infty}_{n'=0}\sum^{n'}_{m'=-n'}\left\{T_{l,j}^{a,\alpha}(n,m,n',m')\alpha_{n',m'}^j + T_{l,j}^{a,\beta}(n,m,n',m')\beta_{n',m'}^j \right\},\\
		b^{lj}_{n,m}= \sum^{\infty}_{n'=0}\sum^{n'}_{m'=-n'}\left\{T_{l,j}^{b,\alpha}(n,m,n',m')\alpha_{n',m'}^j + T_{l,j}^{b,\beta}(n,m,n',m')\beta_{n',m'}^j \right\},\\
		c^{lj}_{n,m}= \sum^{\infty}_{n'=0}\sum^{n'}_{m'=-n'}T_{l,j}^{c, \gamma}(n,m,n',m')\gamma_{n',m'}^j, 
	\end{cases}
\end{eqnarray}
where $T_{l,j}^{a,\alpha},T_{l,j}^{a,\beta}, T_{l,j}^{b,\alpha},T_{l,j}^{b,\beta}$ and $T_{l,j}^{c,\gamma}$ are translation operators. Their explicit expressions can be derived by addition theorem\cite{M-06}, and given in \cite{Xu1996} based on the Gaunt coefficients. However, these expressions are highly involved and suffer from numerical instability.


For the ease of implementation and stability, here we make use of the projection formulas (see \eqref{proj1}-\eqref{proj3}) to obtain the incoming coefficients $\{a^{lj}_{n,m},b^{lj}_{n,m},c^{lj}_{n,m}\}$ by evaluating the incident field on the boundary of $S_l$ induced by the outgoing expansion coefficients $\{\alpha^j_{n,m},\beta^j_{n,m},\gamma^j_{n,m}\}$ from sphere $S_j$. This is numerically much more stable and can be accelerated by the Fast Fourier Transform(FFT). 
More specifically,  for a given incident wave $\uinc=(\uinc_{\fs},\uinc_{\fp})$ on a sphere $S_0$, either from the external plane wave or the field induced from the scattered wave of other particles, one can derive the following result using the orthogonality of spherical harmonics,
\begin{eqnarray}
	a_{n,m} &=& \frac{\mi\ks R}{n(n+1)(j_n(\ks R)+\ks Rj'_n(\ks R) )}\int_{S_0} \uinc_\fs\cdot {\rm Grad}Y_{n}^{-m}(\hat{x})ds, \label{proj1}\\
	b_{n,m} &=&  \frac{-1}{  n(n+1) j_n(\ks R)}\int_{S_0} \uinc_\fs \cdot \hat{x}\times {\rm Grad}Y_{n}^{-m}(\hat{x}) ds, \label{proj2}\\
	c_{n,m} &=& \frac{1}{\kp j'_n(\kp R)}\int_{S_0} \uinc_\fp\cdot Y_n^{-m}(\hat{x})\hat{x}ds, \label{proj3}
\end{eqnarray}
based on the identities
\begin{eqnarray*}
	\begin{split}
		\nabla\times\nabla\times(x u_{n,m}^{\ks}) & = \nabla(u_{n,m}^{\ks}+x\cdot \nabla u_{n,m}^{\ks}) + \ks^2 x u_{n,m}^{\ks},\\ 
		\nabla\times(x u_{n,m}^{\ks}) & = \nabla u_{n,m}^{\ks}\times x,\\
		\nabla u_{n,m}^{\kp} &= \kp j'_n(\kp R) Y_n^m(\hat{x}) + \frac{j_n(\kp R)}{R}{\rm Grad} Y_n^m(\hat{x}).
	\end{split}
\end{eqnarray*}
\begin{remark}
	The denominator in \eqref{proj1}-\eqref{proj3} might be zero for some resonance value of $j_n(r)$. However, the denominator in \eqref{proj1} and \eqref{proj2} can not be simultaneously zero, since all the zeros of $j_n(r)$ are simple\cite{DR-book2}. Therefore, a remedy is to use $$\nabla\times\uinc_\fs/(\mi\ks)=\sum_{n=0}^{\infty}\sum_{m=-n}^{n}\left(b_{n,m}\nabla\times\nabla\times(x u_{n,m}^{\ks})/(\mi\ks) -a_{n,m}\nabla\times(x u_{n,m}^{\ks})\right)$$ on $S_0$ to find $a_{n,m}$ or $b_{n,m}$ when resonance happens. Similarly, one can use 
	\[
	c_{n,m} = \frac{R}{n(n+1)j_n(\kp R)}\int_{S_0} \uinc_\fp\cdot {\rm Grad}Y_n^{-m}(\hat{x})\hat{x}ds.
	\]
	to find $c_{n,m}$ when $j'_n(\kp R)=0$. For numerical stability, the two ways can be combined together by least square. We skip the details by simply assuming that $\kp$ and $\ks$ are not the resonance frequencies in $S_0$ for any $n\ge 0$.
\end{remark}

To sum up, the total elastic field on the sphere $S_l$ is given by
\begin{eqnarray}
	\boldsymbol{u}&=&\uinc + \sum_{j\ne l}\sum_{n=0}^{\infty}\sum_{m=-n}^{n}\left(a^{lj}_{n,m}\nabla\times\nabla\times(x_l u_{n,m}^{\ks}(x_l))/(\mi\ks) +b^{lj}_{n,m}\nabla\times(x_l u_{n,m}^{\ks}(x_l)) +c^{lj}_{n,m}\nabla u_{n,m}^{\kp}(x_l)\right) \notag \\ &&+\sum_{n=0}^{\infty}\sum_{m=-n}^{n}\left(\alpha^l_{n,m}\nabla\times\nabla\times(x_l v_{n,m}^{\ks}(x_l))/(\mi\ks) +\beta^l_{n,m}\nabla\times(x_l v_{n,m}^{\ks}(x_l)) + \gamma^l_{n,m}\nabla v_{n,m}^{\kp}(x_l)\right).
\end{eqnarray}
It can been seen that the first term in the preceding expressions accounts for the external incoming field, while the next term accounts for the scattered field coming from all other spheres. The last term accounts for the fields being scattered by $S_l$ itself. Based on the rigid boundary condition \eqref{rigidboundary}, it ends up to be a linear system 
\begin{eqnarray}\label{multiequ}
	\mathcal{S}^{-1}\begin{bmatrix}
		\alpha^l_{n,m} \\
		\beta^l_{n,m} \\
		\gamma^l_{n,m} \\
	\end{bmatrix}-\sum_{j\ne l}\begin{bmatrix}
		T_{l,j}^{a,\alpha}& T_{l,j}^{a,\beta} & 0  \\
		T_{l,j}^{b,\alpha}& T_{l,j}^{b,\beta} & 0 \\
		0& 0 &  T_{l,j}^{c,\gamma}\\
	\end{bmatrix}
	\begin{bmatrix}
		\alpha^j_{n,m} \\
		\beta^j_{n,m} \\
		\gamma^j_{n,m} \\
	\end{bmatrix}
	=\begin{bmatrix}
		a^l_{n,m} \\
		b^l_{n,m} \\
		c^l_{n,m} \\
	\end{bmatrix},
\end{eqnarray} 
with $ n= 0,1,\cdots,-n\le m\le n, l = 1,2,\cdots, M.$ Equation \eqref{multiequ} is the multiple scattering system for elastic spheres. 

\subsection{Scattering of multiple general-shaped particles}
We now extend the multiple scattering theory of spherical particles to the non-spherical case. For simplicity, we assume particles are well-separated in the sense that each particle is enclosed in a sphere and all the spheres are non-intersecting. In addition, assume all the particles are the same up to a rotation. In other words, every particle $D_l$ is obtained by a translation and rotation from a reference particle $D_0$ centered at the origin,
\[D_l = \{x=s_l+\mathcal{Q}_ly; y\in D_0\},\]
where $s_l\in\mathbb{R}^3$ and $\mathcal{Q}_l$ is a $3\times3$ rotation matrix. The main difference from the previous case to the current case is that we need a scattering matrix $\mathcal{S}$ for the non-spherical obstacle $D_0$, which maps the incoming field to the scattered field of $D_0$. This can not be obtained analytically but can be constructed numerically through the boundary integral equation. More specifically, denote $\Gamma_0$ the boundary of $D_0$.  Given an incident wave $\uinc$, we can represent the scattered wave $\bv$ by the single layer potential \eqref{singleintrep} (with $\Gamma$ replaced by $\Gamma_0$) and solve the boundary integral equation
\begin{eqnarray}\label{bdintegral}
	\int_{\Gamma_0}\Phi(x,y)\bJ(y)ds_y = -\uinc, \quad x\in \Gamma_0.
\end{eqnarray}
Note that the boundary integral equation \eqref{bdintegral} is only uniquely solvable when $\ks$ or $\kp$ is not the interior Dirichlet eigenvalue of the elastic equation \eqref{navierequ}. We assume this is satisfied for $D_0$ since the integral equation based on combined layer potential\cite{Louer2014} can be used if it is not the case. 
Suppose now that we have the obstacle $D_0$ enclosed by $S_0$, where $S_0$ is centered at the origin with radius $R$. To numerically construct the scattering matrix on $S_0$, we sequentially choose
\[
 \nabla\times\nabla\times(xu_{n,m}^{\ks})/(\mi \ks),\quad \nabla\times(xu_{n,m}^{\ks}),\quad \nabla u_{n,m}^{\kp},\quad m=-n,\cdots,n,\quad n=0,1,2,\cdots, 
 \]
 as the incident wave $\uinc$ for $D_0$ and solve the boundary integral equation \eqref{bdintegral}. Then evaluate the scattered wave $\bv$ on $S_0$ through \eqref{singleintrep} and convert it into \textit{outgoing expansion coefficients}. 	The formula for converting the scattered wave $\bv$ to the outgoing coefficients on $S_0$ can also be obtained through projection 
\begin{eqnarray}
	\alpha_{n,m} &=& \frac{\mi \ks R}{n(n+1)(h^{(1)}_n(\ks R)+\ks Rh^{(1)'}_n(\ks R) )}\int_{S_0} \bv_\fs\cdot  {\rm Grad}Y_{n}^{-m}(\hat{x})ds, \label{proj4}\\
	\beta_{n,m} &=&  \frac{-1}{  n(n+1)\mi \ks h^{(1)}_n(\ks R)}\int_{S_0} \bv_\fs \cdot \hat{x}\times {\rm Grad}Y_{n}^{-m}(\hat{x}) ds, \label{proj5}\\
	\gamma_{n,m} &=& \frac{1}{\kp h^{(1)'}_n(\kp R)}\int_{S_0} \bv_\fp\cdot Y_n^{-m}(\hat{x})\hat{x}ds. \label{proj6}
\end{eqnarray}
which is the same as \eqref{proj1}-\eqref{proj3} by simply replacing $j_n(r)$ by $h_n^{(1)}(r)$. 
Once the scattering matrix $\mathcal{S}$ is known, the solution to the full elastic equation for $M$ general-shaped particles can be turned into a multiple scattering problem of the enclosing spheres, which is exactly the same as equation \eqref{multiequ}.

\subsection{Fast algorithm for multiple particles scattering}
Let us now assume all the incoming and outgoing expansion are truncated at $n=N$ terms.   When the number of particles $M$ is large and all the particles have non-negligible interaction, direct solving \eqref{multiequ} will be difficult since the system is large and dense. In addition, such a system is very ill-conditioned since the elements in the scattering matrix $\mathcal{S}$ is exponentially decreasing as $n$ increases, as can be seen for the single sphere case. A better formulation is to multiply both sides by $\mathcal{S}$, which yields
\begin{eqnarray}\label{multiequ2}
	\left(I-\mathcal{S}\sum_{j\ne l}\begin{bmatrix}
		T_{l,j}^{a,\alpha}& T_{l,j}^{a,\beta} & 0  \\
		T_{l,j}^{b,\alpha}& T_{l,j}^{b,\beta} & 0 \\
		0& 0 &  T_{l,j}^{c,\gamma}\\
	\end{bmatrix}\right)
	\begin{bmatrix}
		\alpha^j_{n,m} \\
		\beta^j_{n,m} \\
		\gamma^j_{n,m} \\
	\end{bmatrix}=\mathcal{S}\begin{bmatrix}
		a^l_{n,m} \\
		b^l_{n,m} \\
		c^l_{n,m} \\
	\end{bmatrix}, l = 1,2,\cdots, M,
\end{eqnarray} 
where $I$ is the identity matrix. This is equivalent to applying a block diagonal preconditioner to the system \eqref{multiequ}.  Since the translation matrices $T_{l,j}$ do not involve self interactions, the preconditioned system \eqref{multiequ2} is a discrete analogue of a second kind boundary integral equation, which is much better conditioned than the system \eqref{multiequ} and converges rapidly when iterative solver like GMRES is used. Since the matrix is dense, straightforward implementation of the iterative solver suffers from expensive matrix–vector multiply and requires $\mathcal{O}(M^2N^3)$ work in each iterative process. In order to accelerate the solution procedure, we apply the fast multipole method (FMM) to speed up the matrix–vector product, which can easily reduce the cost to $\mathcal{O}(MN^3)$ work per iteration.  As the literature on FMMs is substantial, we omit the technical details and refer interested readers to \cite{R-90,GG2013} for detailed information. 

To sum up, the multiple scattering method based on expansion coefficients not only reduces the number of degrees of freedom, but also pre-computes the solution for each particle in isolation, so that the linear system we solve by iteration on the multi-sphere system is much better conditioned and can be solved by the combination of GMRES and FMM with a cost proportional to $M$. Similar methods have been successfully applied to scattering of multiple particles in the acoustics \cite{LKB-JCP15} and electromagnetics\cite{GG2013}. 
\begin{remark}
	The error analysis of the proposed method is highly involved due to the multiple scattering. Numerically, spectral accuracy can be achieved when the truncation number $N$ increases. However, the error not only depends on the truncation number, but also the wavenumber $\ks$ and $\kp$, the closest distance of two particles, the accuracy for solving the boundary integral equation \eqref{bdintegral} and the FMM. Although there exist solvers for \eqref{bdintegral} with spectral accuracy\cite{Louer2014,JH2022}, and errors for the FMM has been analyzed in \cite{ED20}, detailed analysis for the error of multiple scattering is beyond the scope of this paper. In the case of a single sphere, since all the modes up to $n=N$ is exact, the error is only due to the boundary values of the incident field for modes $n>N$. One can follow the calculation in \cite{GD07} to show that the truncation error for plane wave incidence is  
	\begin{eqnarray}
		|\varepsilon_N| \le C\frac{ e^{\kappa R/2}}{N!}\left(\frac{\kappa R}{2}\right)^N, \mbox{ with } \kappa=\max\{\ks,\kp\},
	\end{eqnarray}
	with $C$ independent of $N$ and $\kappa R$, which is exponentially small for large $N$.
\end{remark}

\section{Numerical experiments}
In this section, we test the algorithm in several examples. The general procedure is: first apply the fast multiple elastic scattering algorithm to obtain the far field data, and then use the time reversal method to recover the location and shape of particles. In the forward simulation, we apply GMRES to solve \eqref{multiequ2} with a residual accuracy  $10^{-6}$ and modify the code of the fast multiple method provided in \cite{GG2013} to accelerate the matrix-vector product. For the inversion part, to avoid inverse crime and test the robustness of the algorithm, we add $5\%$ Gaussian noise to the simulated far field data. Once the numerical far field operator $F$ is found, the eigenvalues and eigenvectors are obtained through the `\textit{eig}' command in \textit{MATLAB}, and the Herglotz wave $\uincf$ is evaluated in a straightforward manner, although NUFFT\cite{GL04} can be used to accelerate the evaluation if time becomes an issue. According to Theorem \ref{globalf}, both the compressional and shear incident wave can be used to focus the particles, but here we only use one component of the shear wave for simplicity.

Throughout all the examples,  the compressional and shear wavenumbers are chosen to be $\kp=\frac{\pi}{3}$ and $\ks=\frac{5\pi}{8}$, and the truncation number for each particle in the expansion \eqref{inspan} is $N=10$. The far field is measured at $231$ different directions in $[0,\pi]\times[0,2\pi] $, with $11$ points in $[0,\pi]$ chosen by Gauss-Legendre nodes and $21$ equally spaced points in $[0,2\pi]$ . 

\subsection{Example 1}
In this example, we try to recover a single elastic sphere centered at $(5,0,0)$ with radius $0.5$. Results are shown in Figure \ref{ex2}, where \ref{ex2}(a) is the exact shape of the sphere. The reconstruction of the sphere is given in \ref{ex2}(b). The cross section in the $z=0$ plane of the magnitude of Herglotz wave generated by the eigenfunction associated with the largest eigenvalue is shown in \ref{ex2}(c) and the full three dimensional plot is given in \ref{ex2}(d).  Clearly, the Herglotz wave function achieves its maximum at the location of the sphere, which shows the time reversal method is very effective in reconstructing the location of a single sphere. To reconstruct the shape, we have to choose an appropriate cut-off value. Such a value is generally empirical, and here we choose the value to be 1 in order to obtain \ref{ex2}(b). 
\begin{figure}
	\centering
	\subfloat[]{}\includegraphics[width=6cm,height=4.8cm]{./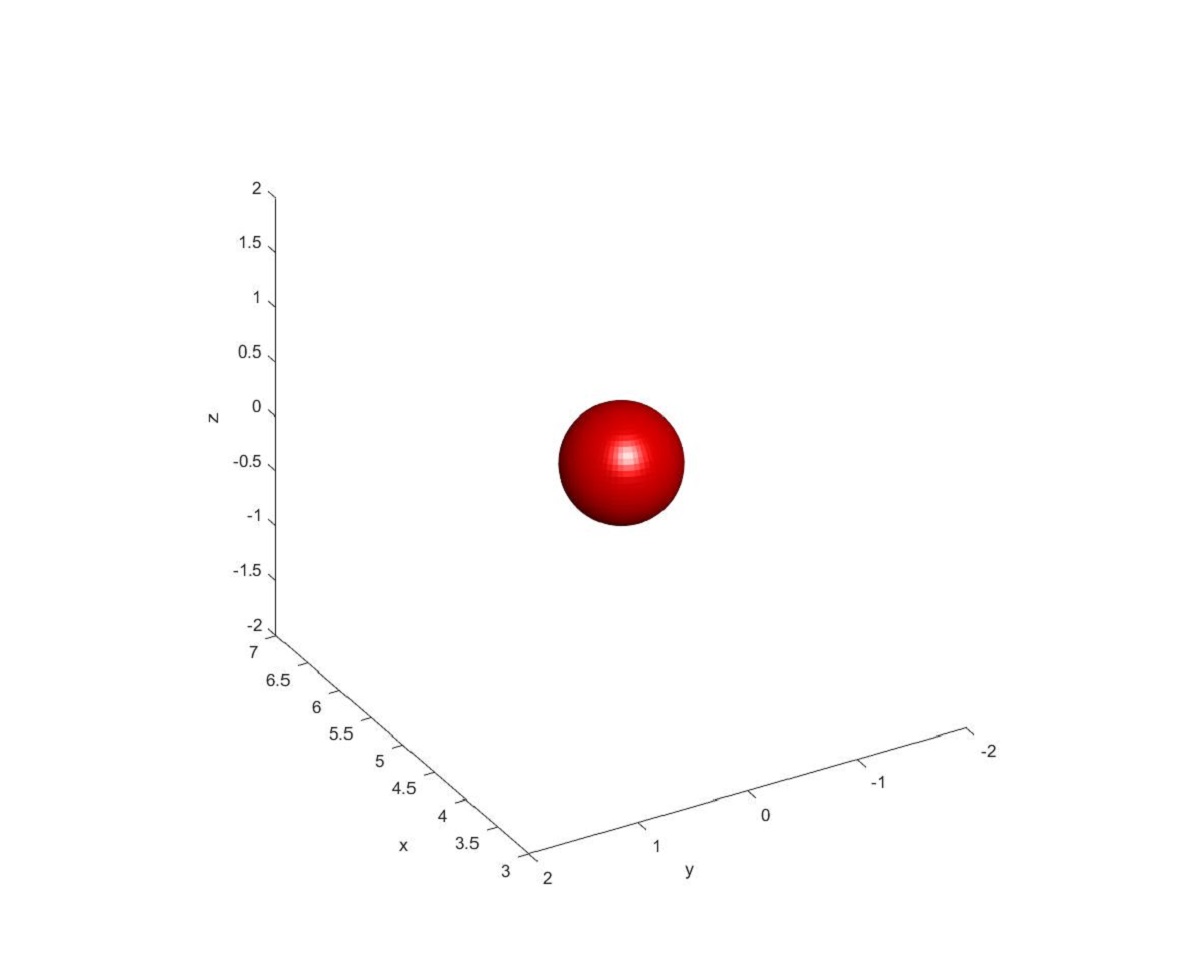}
	\subfloat[]{}\includegraphics[width=6cm,height=4.8cm]{./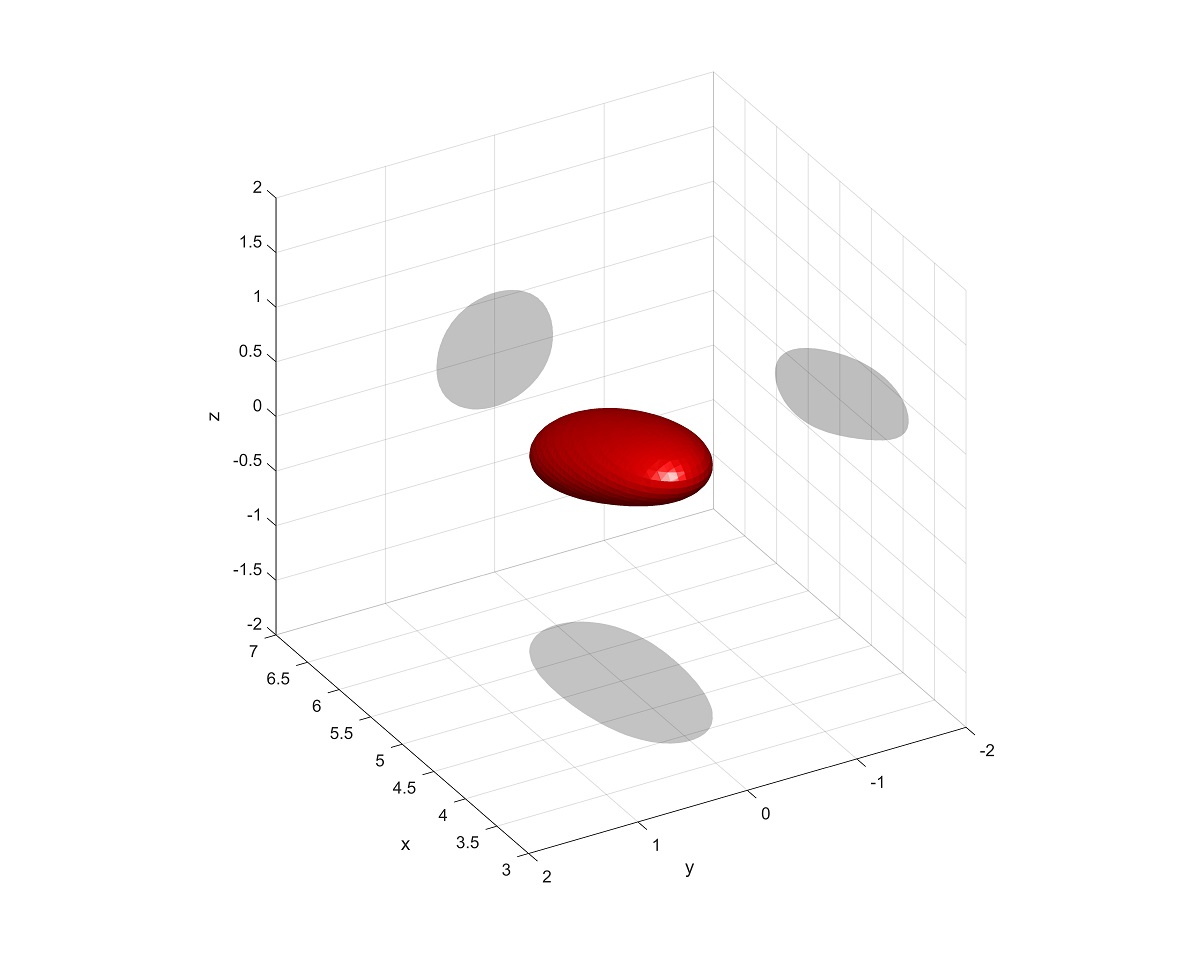}
	\subfloat[]{}\includegraphics[width=6cm,height=4.8cm]{./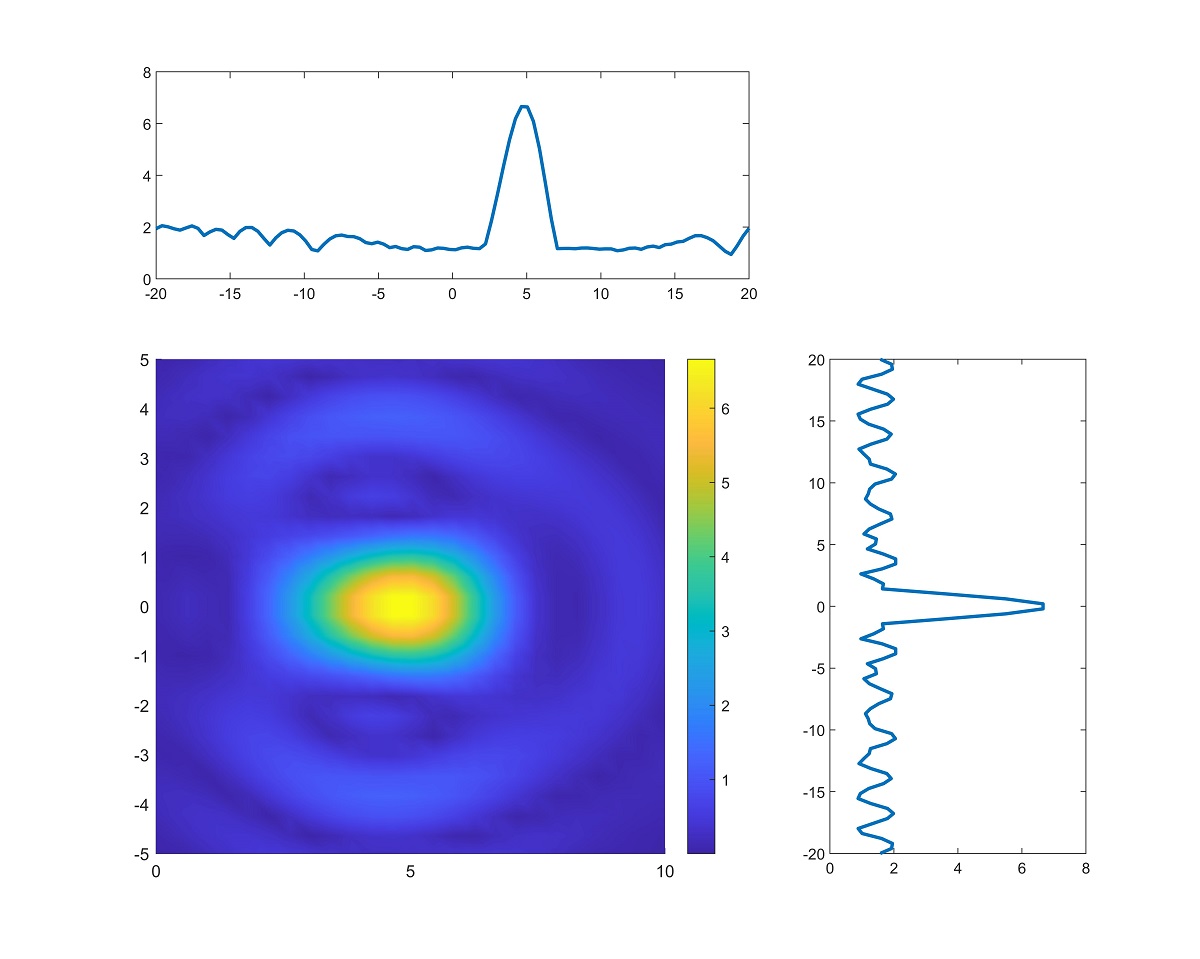}
	\subfloat[]{}\includegraphics[width=6cm,height=4.8cm]{./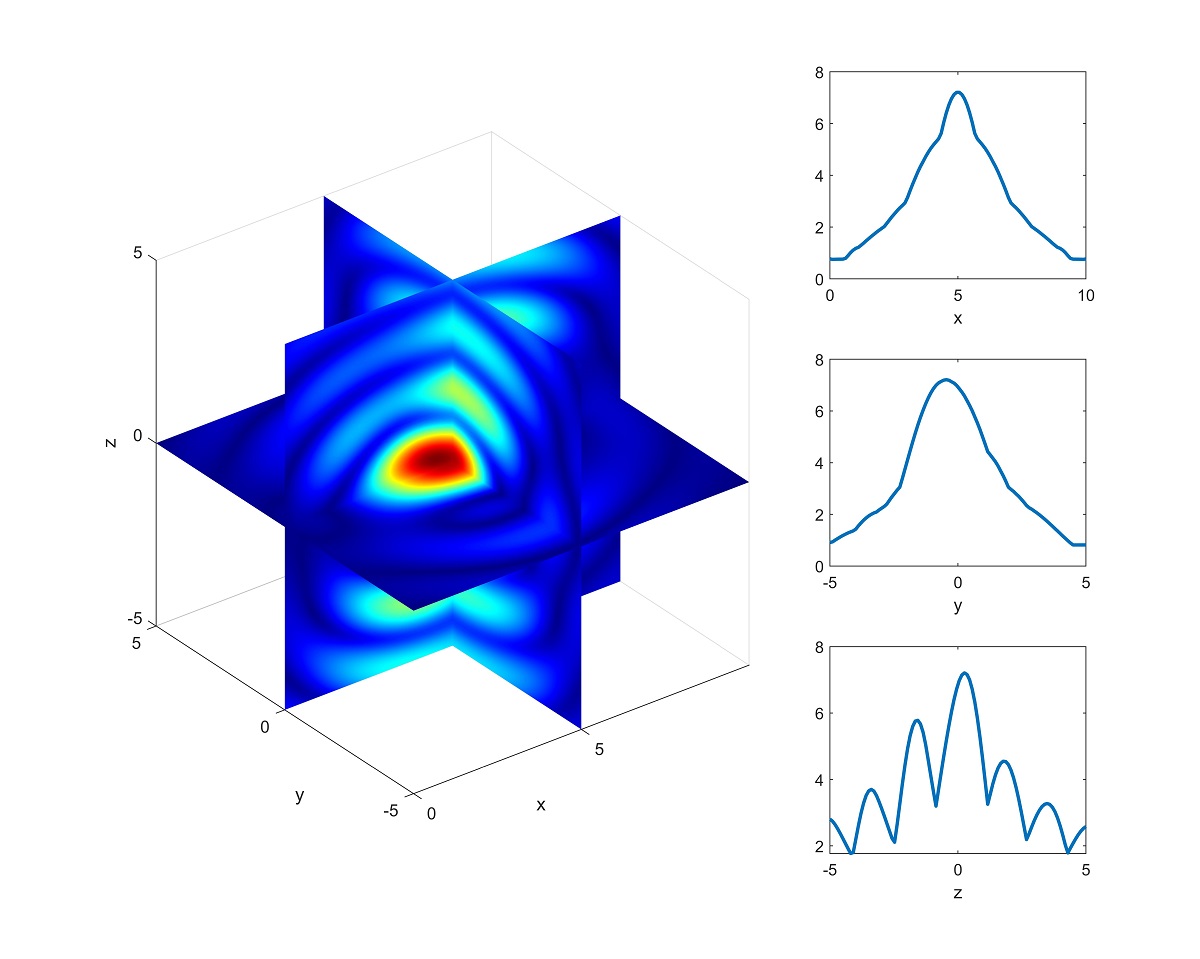}
	\caption{Imaging of a single sphere: (a) Exact shape; (b) Reconstruction based on the TRM; (c) Imaging function in the $z=0$ plane; (d) Imaging function in the $x=0$, $y=0$ and $z=0$ plane.}\label{ex2}
\end{figure}
\subsection{Example 2}
In this example, we test the reconstruction of a rotated \textit{starfish} centered at $(5,0,0)$, as shown in Figure \ref{ex3}(a). The shape is generated by rotating with respect to the $z$ axis with a generating curve given by 
\begin{align*}
	\begin{cases}
	r(t)=[2+0.5\cos(5\pi(t-1))]\cos(\pi(t-0.5)),\\
	z(t)=[2+0.5\cos(5\pi(t-1))]\sin(\pi(t-0.5)).
	\end{cases}
\end{align*}
The reconstructed  shape is show in \ref{ex3}(b), which is obtained by choosing a cut-off value to be 0.8 in the Hergoltz wave function. Again, the Hergoltz wave function is obtained by using the eigenfunction associated with the largest eigenvalue. Figure \ref{ex3}(c) shows that the location can be accurately reconstructed. In \ref{ex3}(d), we are trying to recover the shape of the unknown particle by using a different cut-off value 1.5. It shows the time reversal method can only vaguely reconstruct the shape of the obstacle. In order to accurately reconstruct the shape, an iterative type imaging method may be needed with the combination of the current result. Details will be explored in the future.
\begin{figure}
	\subfloat[]{}\includegraphics[width=6cm,height=4.8cm]{./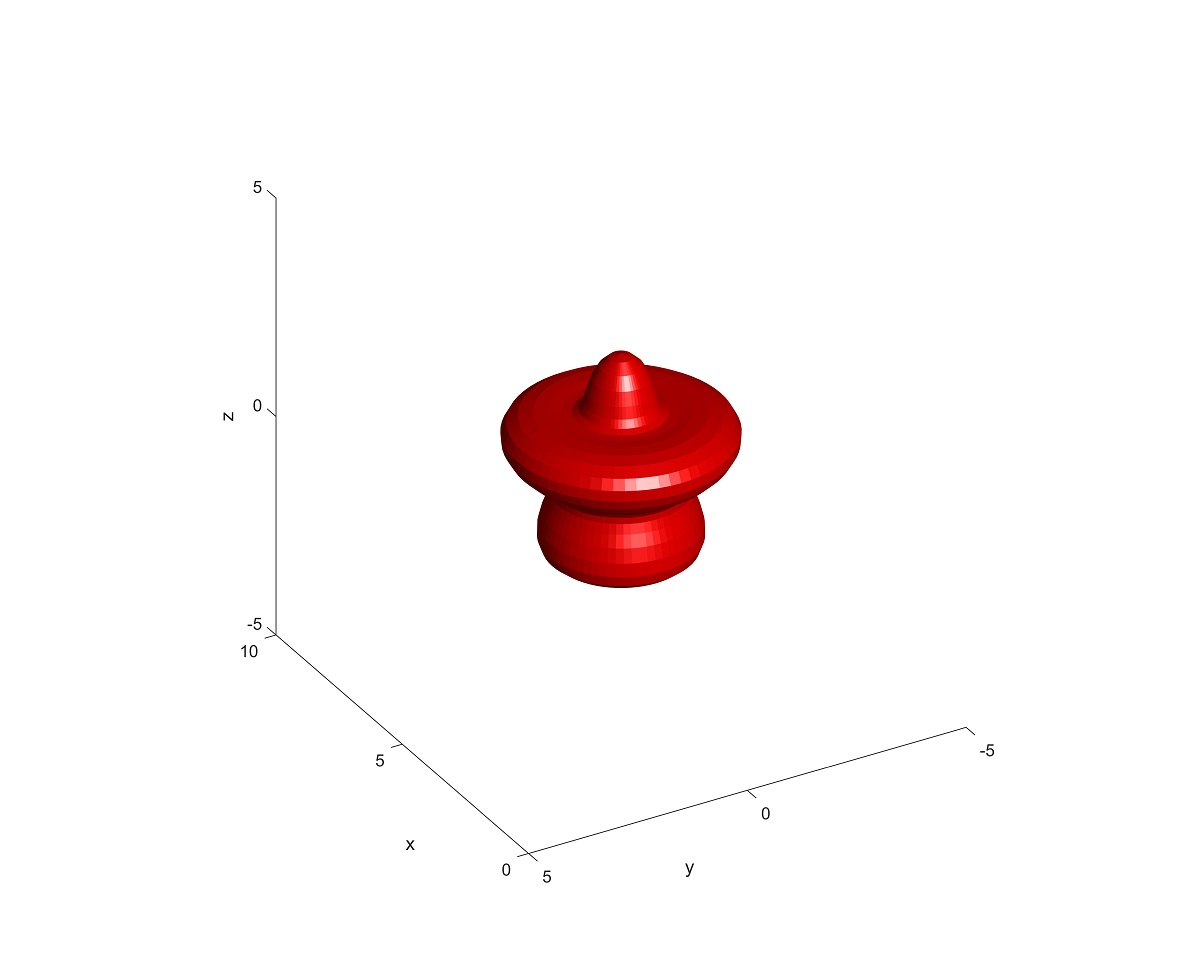}
	\subfloat[]{}\includegraphics[width=6cm,height=4.8cm]{./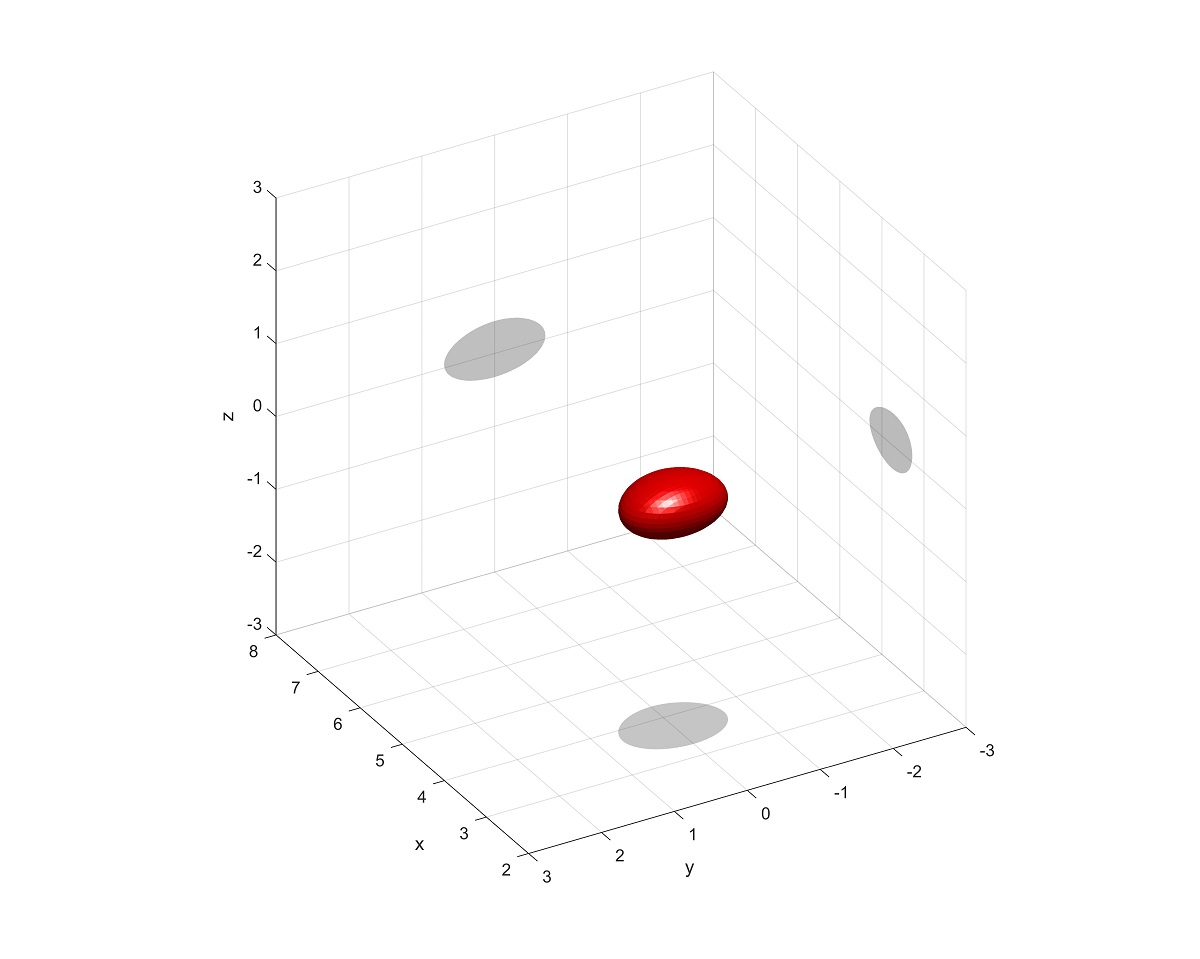}
	\subfloat[]{}\includegraphics[width=6cm,height=4.8cm]{./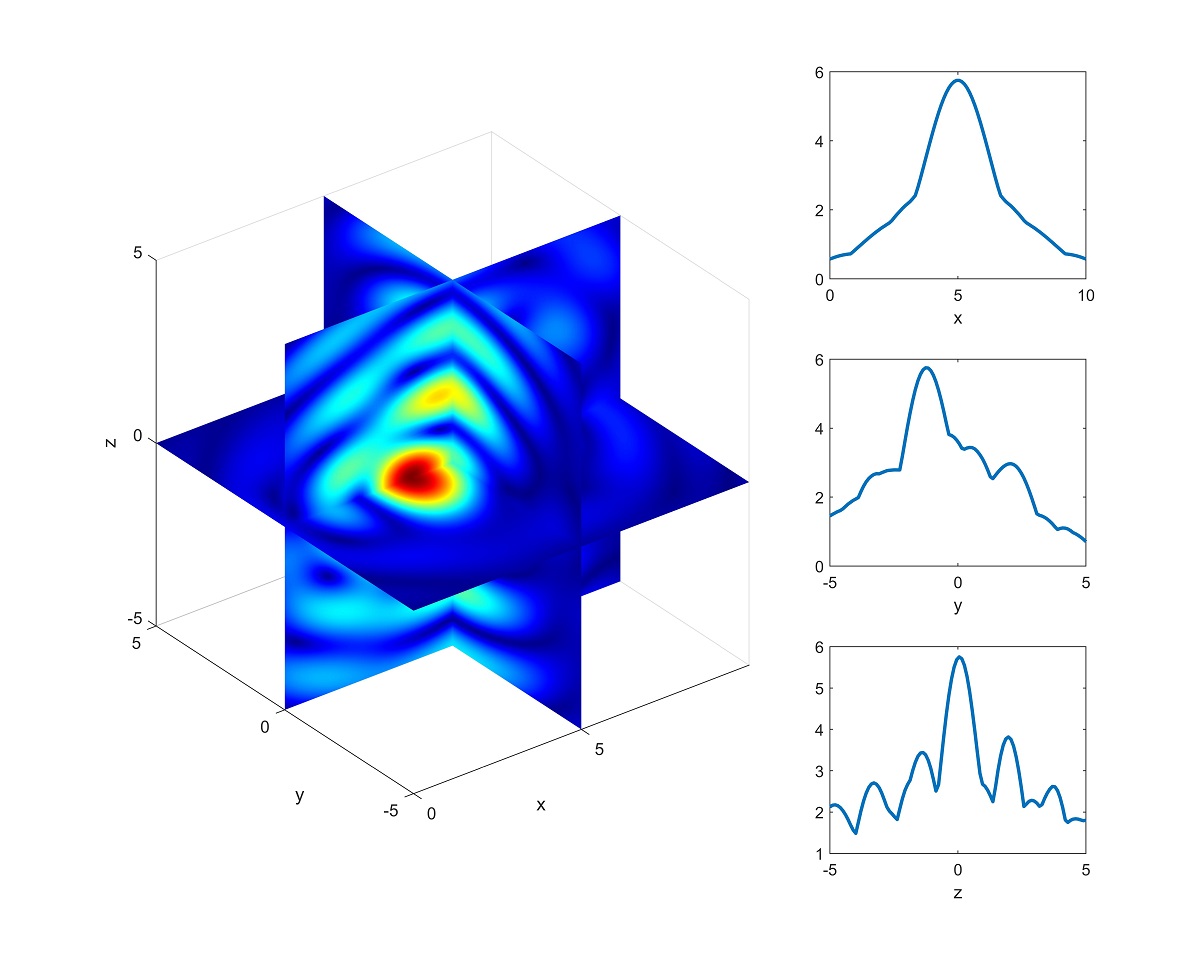}
	\subfloat[]{}\includegraphics[width=6cm,height=4.8cm]{./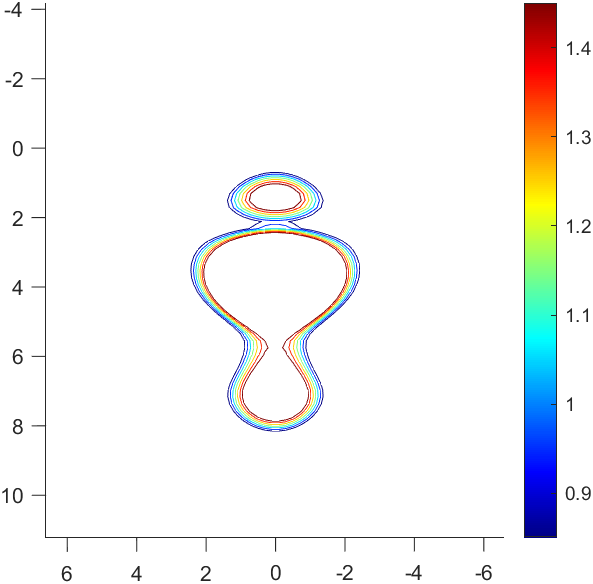}
	\caption{Imaging of a rotated starfish: (a) Exact shape; (b) Reconstruction based on the TRM with a cut-off value $0.8$; (c) Imaging function in the $x=0$, $y=0$ and $z=0$ plane;(d) Imaging the shape through a cut-off value being $1.5$.}\label{ex3}
\end{figure}
\subsection{Example 3}
In this example, we are trying to recover the locations of three spheres simultaneously. The exact locations of the three spheres are shown in Figure \ref{ex4}(a). The recovered obstacles are given in \eqref{ex4}(b). In order to do the reconstruction, we use  the Herglotz wave generated by the eigenfunction associated with the largest eigenvalue. Figure \eqref{ex4}(c) gives the magnitude of the Herglotz wave in the $z=0$ plane and  \eqref{ex4}(d) plots the Herglotz wave function in all the three planes. It clearly shows the location of the spheres can be globally illuminated by the Herglotz function.
\begin{figure}
	\subfloat[]{}\includegraphics[width=6cm,height=4.8cm]{./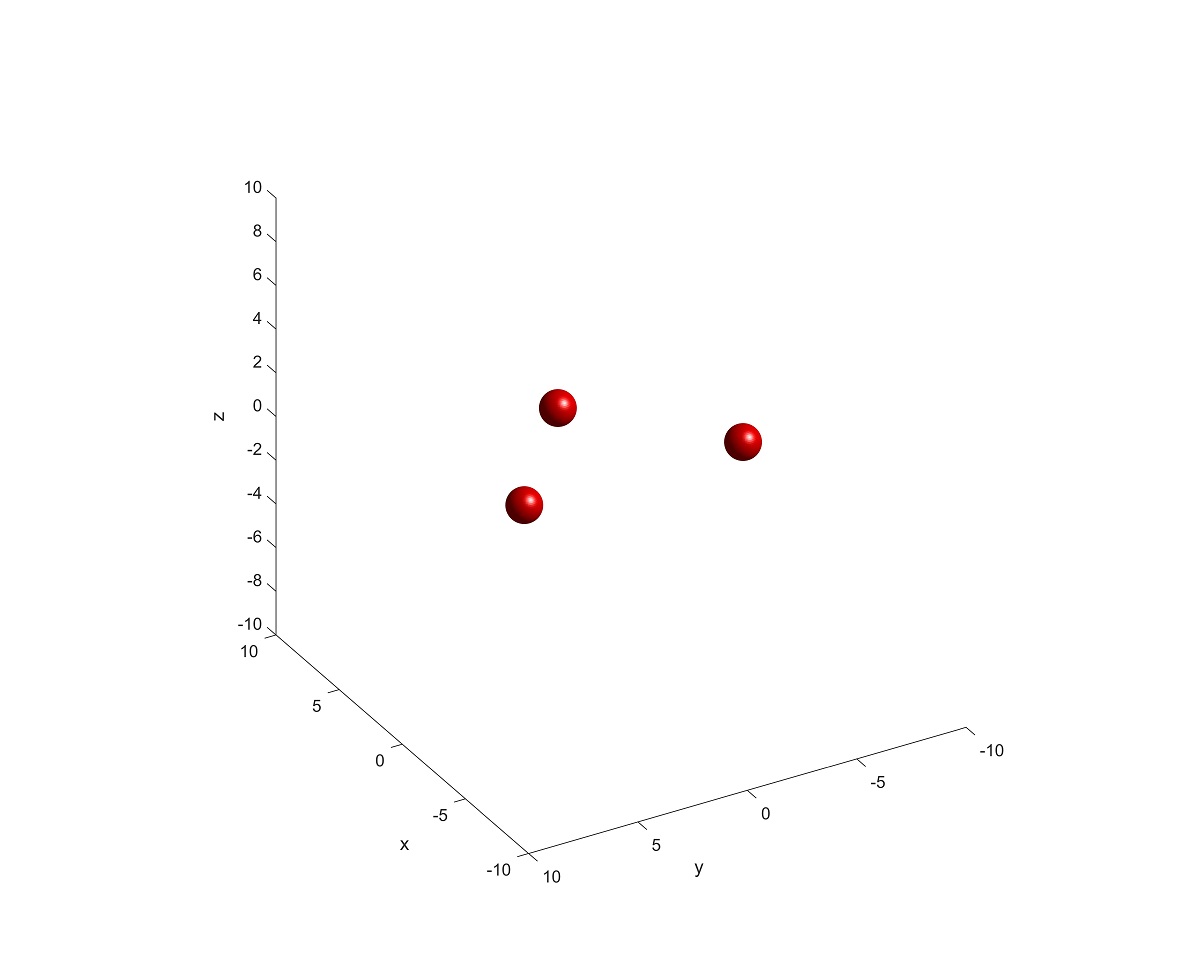}
	\subfloat[]{}\includegraphics[width=6cm,height=4.8cm]{./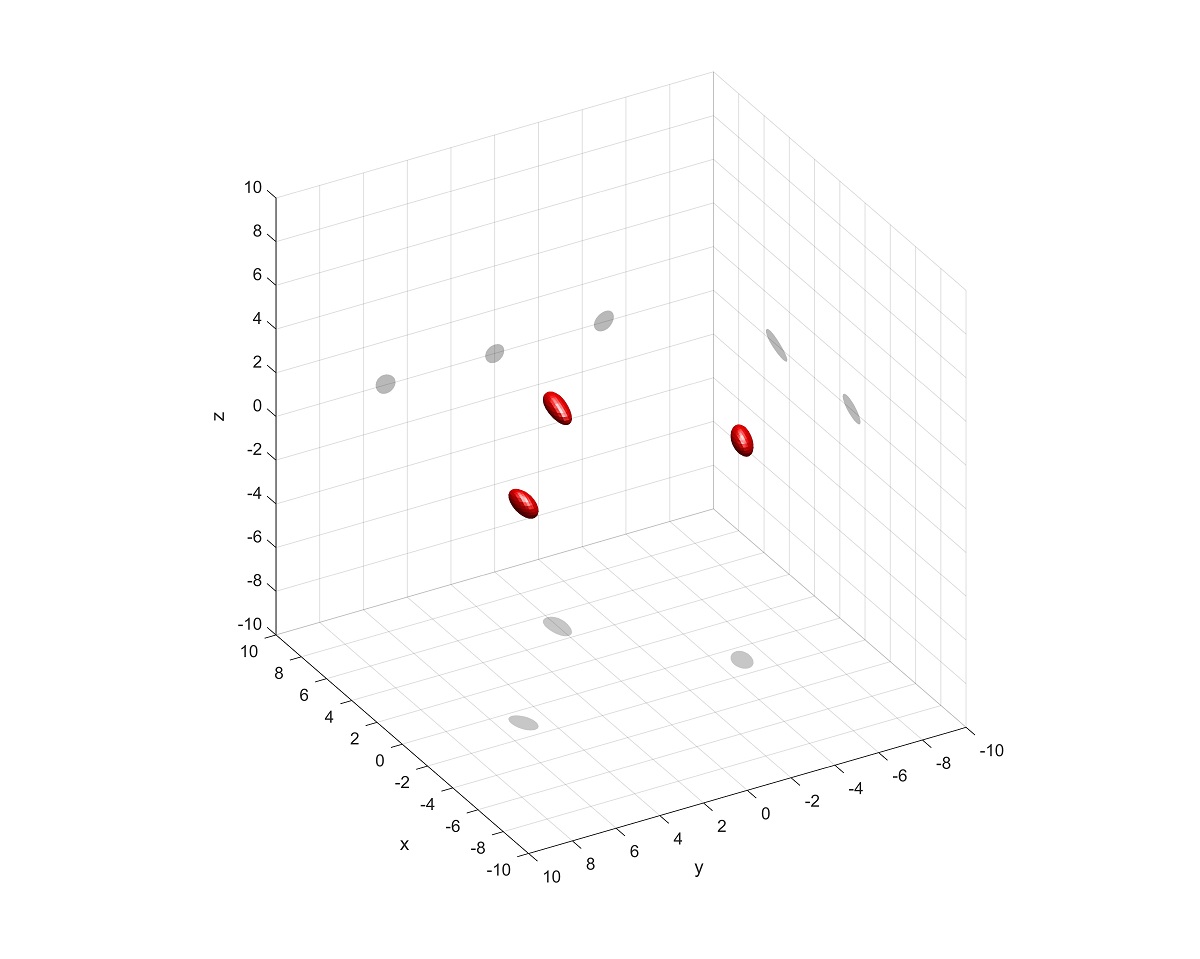}
	\subfloat[]{}\includegraphics[width=6cm,height=4.8cm]{./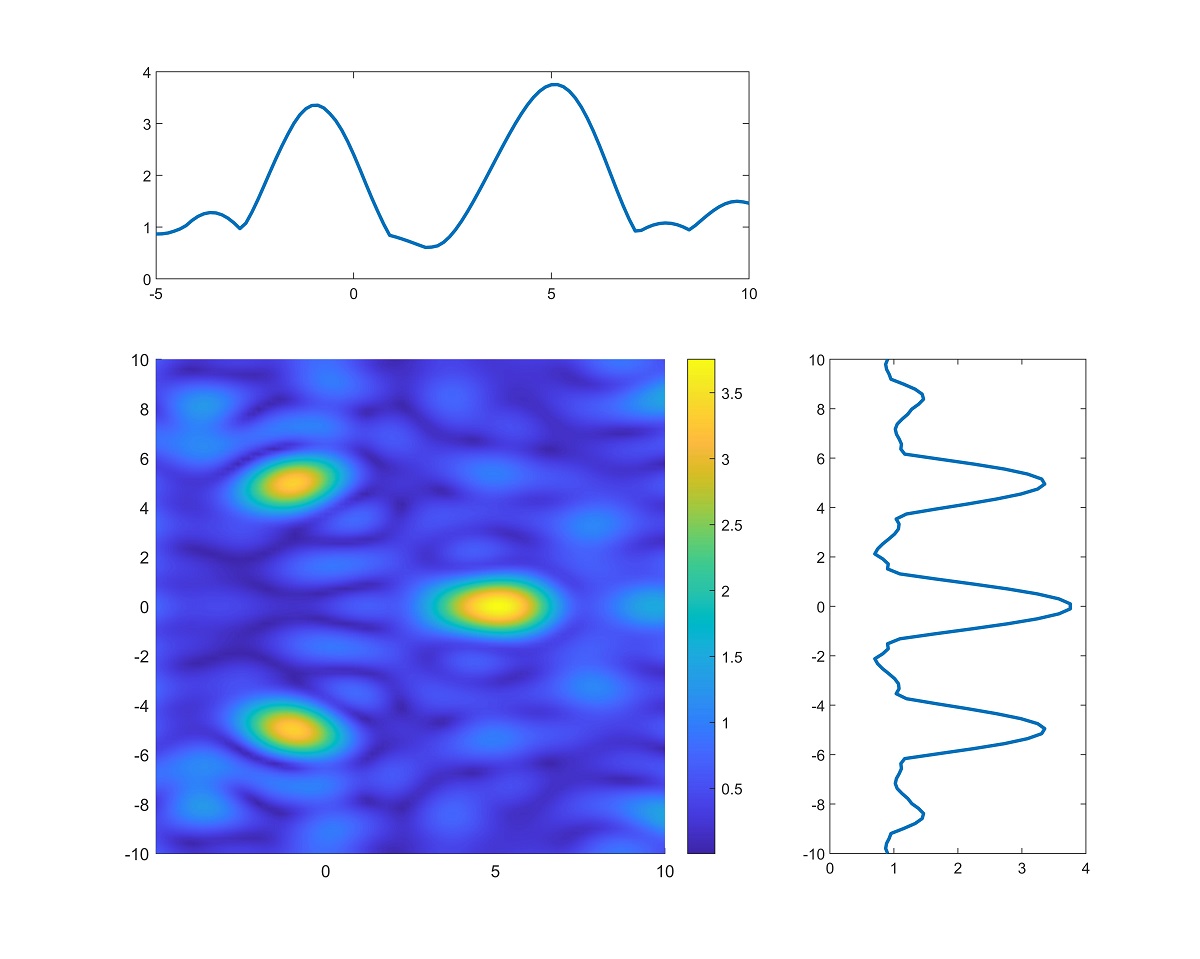}
	\subfloat[]{}\includegraphics[width=6cm,height=4.8cm]{./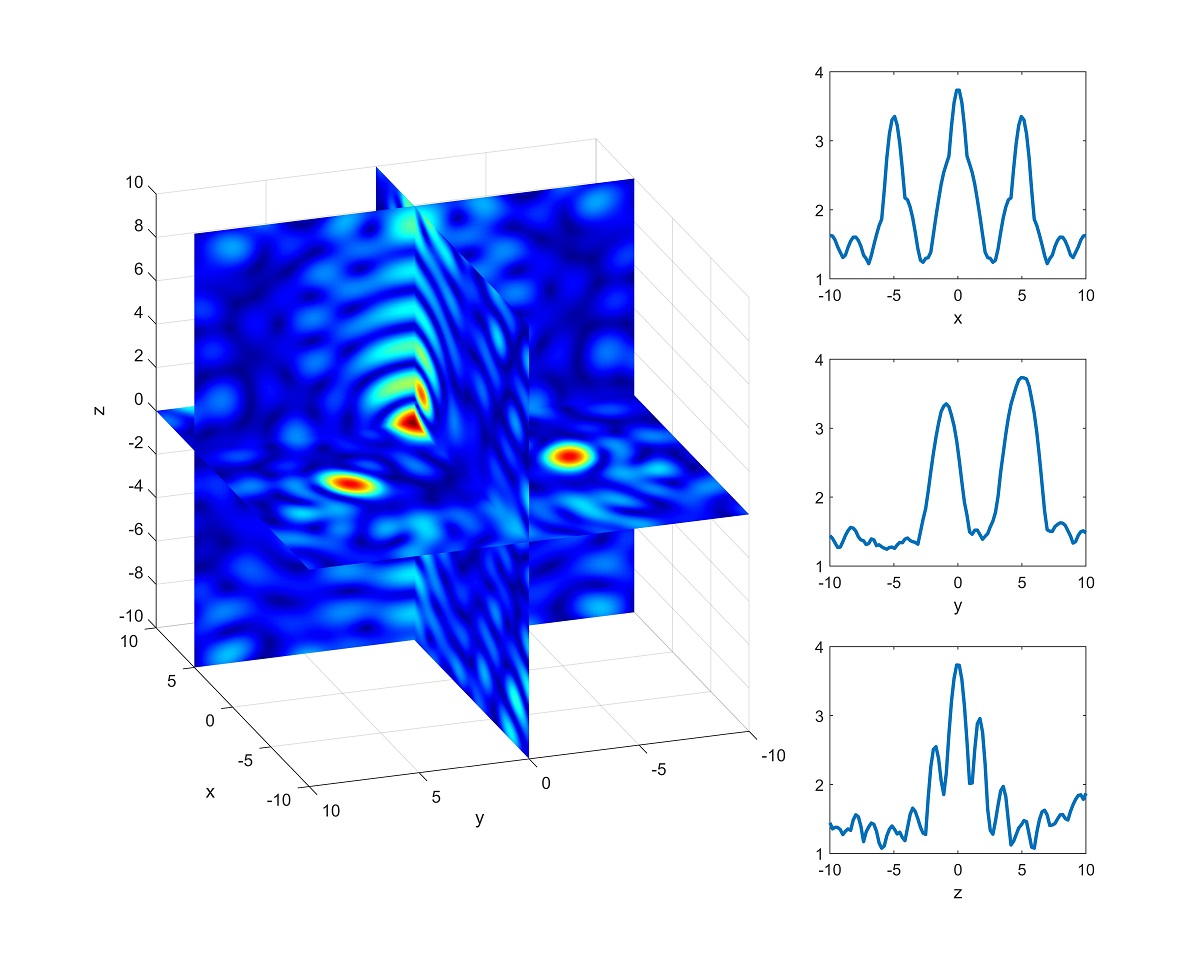}
	\caption{Imaging of three spheres: (a) Exact shape; (b) Reconstruction based on the TRM; (c) Imaging function in the $z=0$ plane; (d) Imaging function in the $x=0$, $y=0$ and $z=0$ plane.}\label{ex4}
\end{figure}
\subsection{Example 4}
In this example, we test the algorithm to recover multiple elastic particles. In particular, Figure \eqref{ex5}(a) gives the exact location of five spheres and \eqref{ex5}(b) shows the recovered obstacles based on TRM. Figure \eqref{ex5}(c) gives the exact location of 27 ellipsoids and \eqref{ex5}(d) shows the recovered obstacles based on TRM. In both cases, the locations of all the particles are accurately reconstructed. It is worth mentioning that our fast algorithm is able to complete the simulation in less than 1000 seconds. While most of the time is spent on the forward simulation, the time for the inversion part is negligible. This demonstrates the effectiveness of the proposed method. 
\begin{figure}
	\subfloat[]{}\includegraphics[width=6cm,height=4.8cm]{./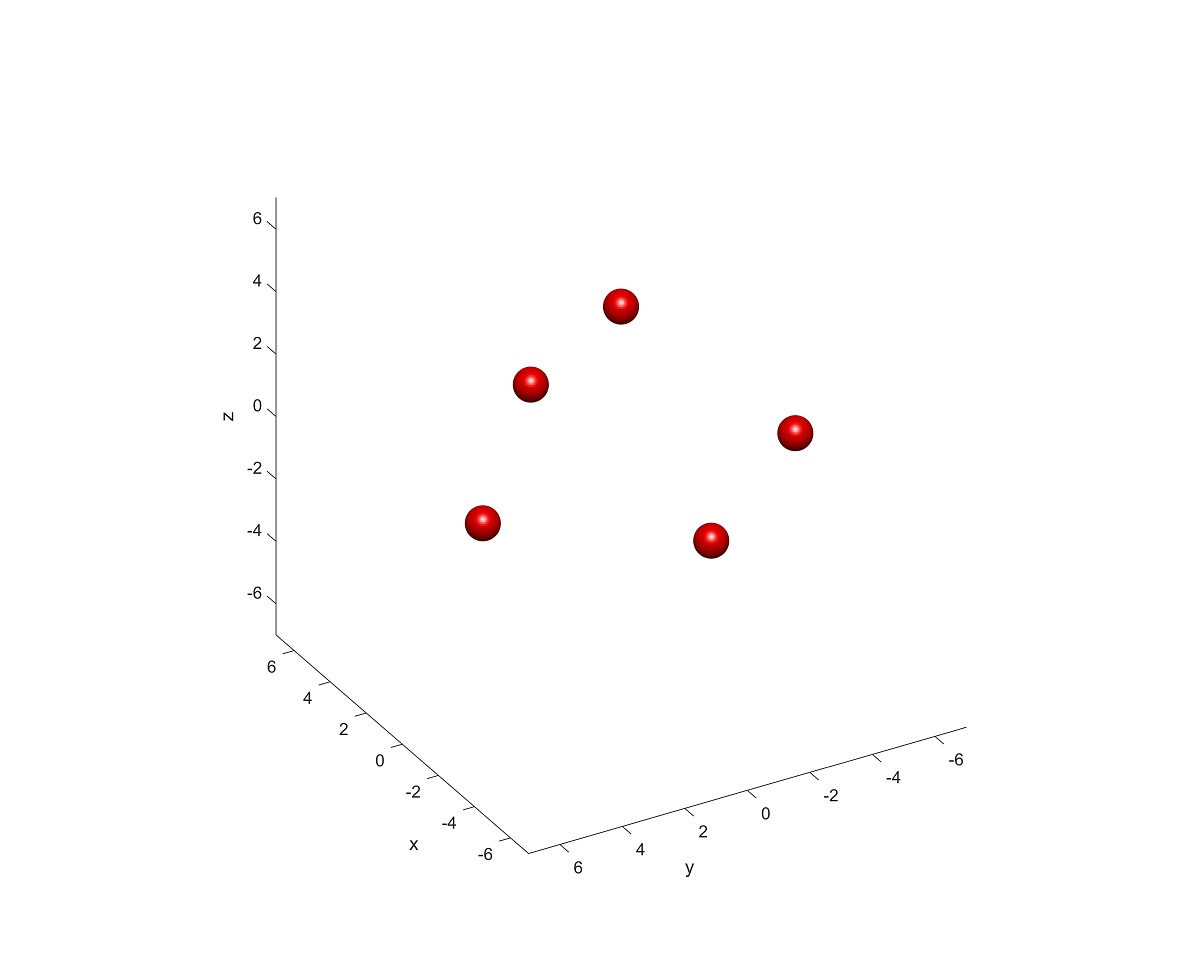}
	\subfloat[]{}\includegraphics[width=6cm,height=4.8cm]{./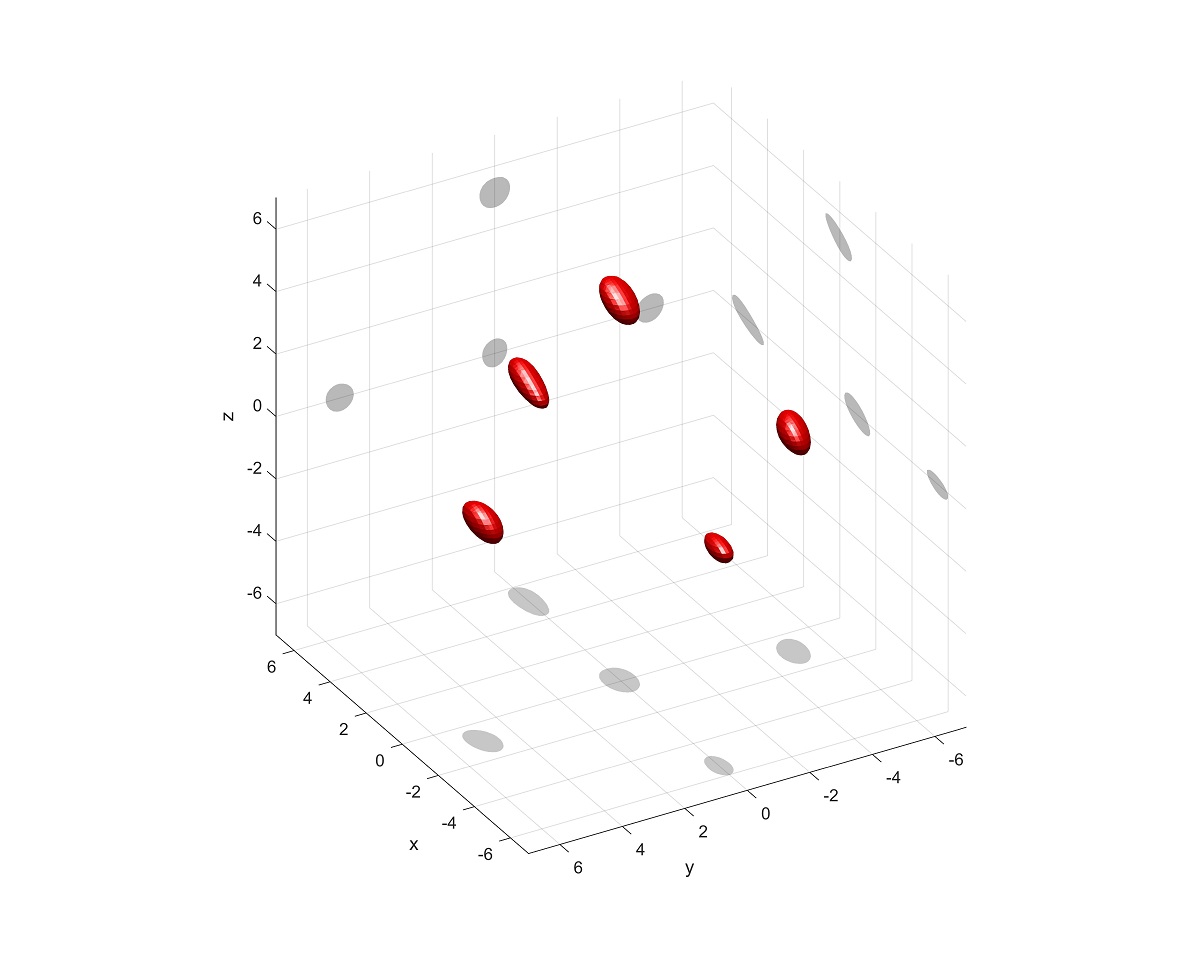}
	\subfloat[]{}\includegraphics[width=6cm,height=4.8cm]{./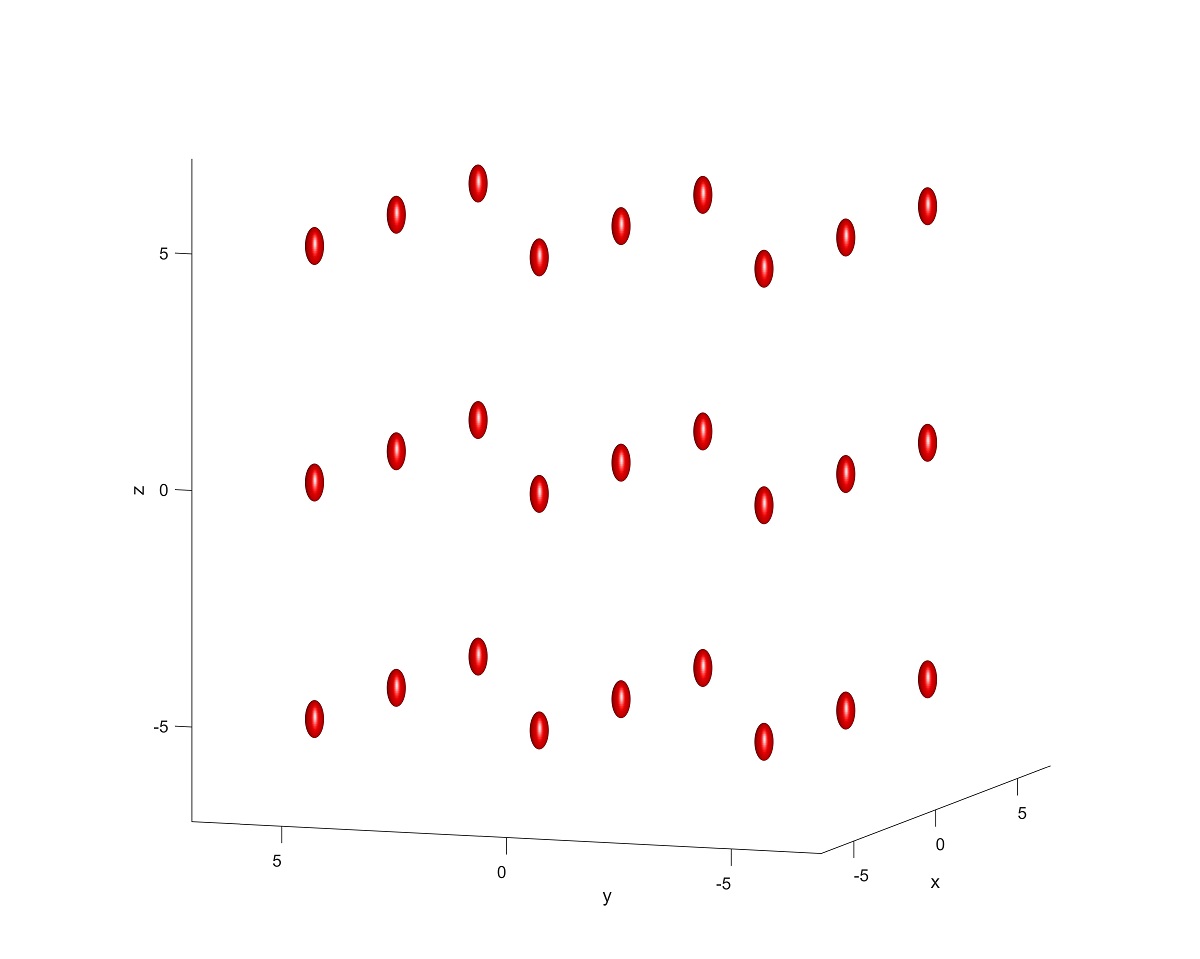}
	\subfloat[]{}\includegraphics[width=6cm,height=4.8cm]{./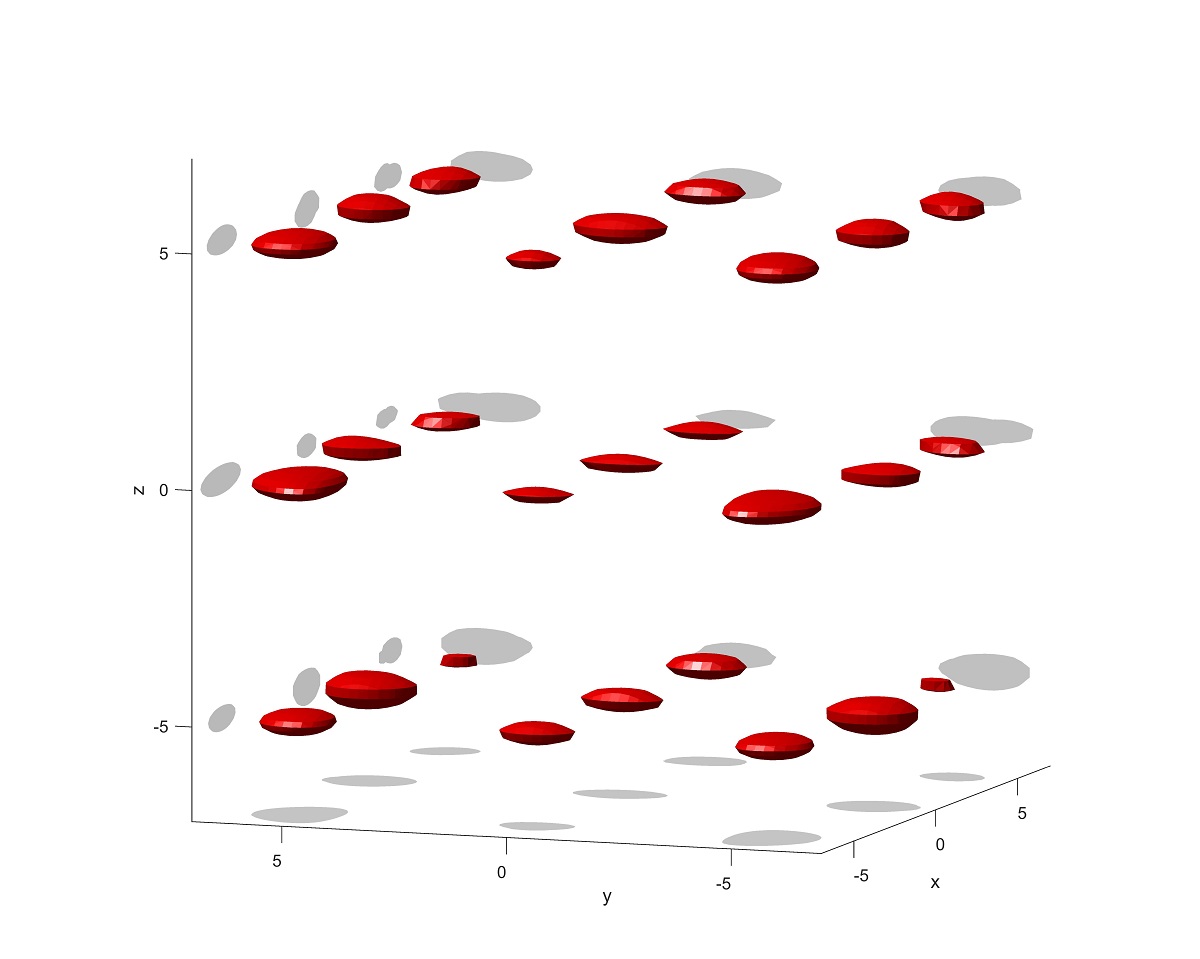}
	\caption{Imaging of multiple particles: (a) Exact shape of $5$ spheres; (b) Reconstruction based on the TRM; (c) Exact shape of $27$ ellipsoids; (d) Reconstruction based on the TRM.}\label{ex5}
\end{figure}
\section{Conclusion}
In this paper,  we analyze the time reversal method for the inverse elastic scattering of multiple particles. We show that selective focusing can be achieved for small and distant particles. Detailed expression for the approximate eigenvectors is given based on the asymptotic analysis of independent elastic scattering. For particles with non-negligible multiple scattering,  we propose a fast algorithm for evaluating the scattering of  multiple elastic particles by combining multiple scattering theory and fast multiple method. The method greatly reduces the number of unknowns and vastly decrease the iteration number in GMRES based on the scattering matrix preconditioner.  Numerical experiments show that the combination of TRM and the fast solver is very effective to simulate the inversion of multiple particles in the existence of non-negligible multiple scattering. Our future work includes developing high resolution imaging method for multiple elastic particles and extending the inversion method to recover particles with other kinds of elastic boundary conditions. 
\appendix
\section{Plane wave expansion}\label{appA}
For an elastic plane wave given by: 
	$$\mathbf{u}^{plane}(\bx) = \frac{1}{\mu} e^{i\ks \bx \cdot \bd}(\bd\times \bp)\times \bd+\frac{1}{\lambda+2\mu}e^{i\kp \bx \cdot \bd}(\bd\cdot \bp) \bd,$$
	where $\bp$ is the polarization direction and $\bd$ is the propagation direction, the incoming expansion is given by
	\begin{eqnarray}\label{planespan}
		\begin{split}
			a_{n,m} &= \frac{1}{\mu}\frac{4\pi \mi^n}{n(n+1)}{\rm Grad}Y_n^{-m}(\bd)\cdot \bp \\
			b_{n,m} &= \frac{1}{\mu}\frac{4\pi \mi^n}{n(n+1)}\bd\times{\rm  Grad}Y_n^{-m}(\bd)\cdot \bp  \\
			c_{n,m} &= -\frac{4\pi \mi^{n+1}}{\kp (\lambda+2\mu)}Y_n^{-m}(\bd)\bd \cdot \bp 
		\end{split}
	\end{eqnarray}
\section{Scattering matrix of a sphere}\label{appB}
	According to the boundary condition \eqref{rigidboundary} on $S_0$,  mode matching yields 
	\begin{eqnarray}\label{scat_spher}
		\begin{bmatrix}
			\alpha_{n,m} \\
			\beta_{n,m} \\
			\gamma_{n,m} \\
		\end{bmatrix}
		=\mathcal{S}_{n,m}
		\begin{bmatrix}
			a_{n,m} \\
			b_{n,m} \\
			c_{n,m} \\
		\end{bmatrix},
		 m=-n,\cdots, n, \quad n = 0,1,2,\cdots, 
	\end{eqnarray} 
Here we let $\alpha_{0,0}=\beta_{0,0}=a_{0,0} = b_{0,0}=0$ and $\mathcal{S}_{n,m}$ be a $3\times 3$ block. It holds from \cite{Louer2014} that $\gamma_{0,0} = -\frac{j'_0(\kp R)}{{h_0^{(1)}}'(\kp R)}c_{0,0}$, and $\beta_{n,m}=-\frac{j_n(\ks R)}{{h_n^{(1)}}(\ks R)}b_{n,m}$ for $n\ge 1$. For the other elements in $\mathcal{S}_{n,m}$, we have
	\begin{eqnarray}
		\begin{bmatrix}
			\alpha_{n,m}\\
			\gamma_{n,m}
		\end{bmatrix}
		=-\begin{bmatrix}
			d_{n,m}^{11} & d_{n,m}^{12} \\
			d_{n,m}^{21} & d_{n,m}^{22}
		\end{bmatrix}^{-1}
		\begin{bmatrix}
			e_{n,m}^{11} & e_{n,m}^{12} \\
			e_{n,m}^{21} & e_{n,m}^{22}
		\end{bmatrix}
		\begin{bmatrix}
			a_{n,m}\\
			c_{n,m}
		\end{bmatrix}
	\end{eqnarray}
	where 
	\begin{eqnarray}
		d_{n,m}^{11} & = &\frac{1}{i\ks R}\left(h_n^{(1)}(\ks R)+\ks R {h_n^{(1)}}'(\ks R)\right),\quad d_{n,m}^{12}  = \frac{1}{R}h_n^{(1)}(\kp R) 
		\\
		d_{n,m}^{21} & = &\frac{n(n+1)}{i\ks R}h_n^{(1)}(\ks R),\quad d_{n,m}^{22}  = \kp {h_n^{(1)}}'(\kp R). 
	\end{eqnarray}
	and $e_{n,m}^{ij}$ is simply replacing the $h_n^{(1)}$ in $d_{n,m}^{ij}$ by $j_n$.

\end{document}